\theoremstyle{plain}
\newtheorem{thm}{Theorem}[section]
\newtheorem{prop}[thm]{Proposition}
\newtheorem{cor}[thm]{Corollary}
\newtheorem{lem}[thm]{Lemma}
\newtheorem{conj}[thm]{Conjecture}
\theoremstyle{definition}
\newtheorem{defn}[thm]{Definition}
\newtheorem{rem}[thm]{Remark}
\newtheorem{prob}[thm]{Problem}
\newcommand{\NN}{{\mathbb{N}}}
\newcommand{\QQ}{{\mathbb{Q}}}
\newcommand{\RR}{{\mathbb{R}}}
\newcommand{\ZZ}{{\mathbb{Z}}}
\newcommand{\Ac}{{\mathcal{A}}}
\newtheorem{thm}{Theorem}
\newtheorem{lem}[thm]{Lemma}
\newtheorem{prop}[thm]{Proposition}
\newtheorem{prob}[thm]{Problem}
\newcommand{\qqed}{\hbox{\rule{6pt}{6pt}}}
\newcommand{\F}{\mathcal{F}}
\newcommand{\B}{\mathcal{B}}
\newcommand{\C}{\mathcal{C}}
\newcommand{\R}{\mathbb{R}}
\begin{document}

\title{Feedback game on $3$-chromatic Eulerian triangulations of surfaces}
\author{Akihiro Higashitani\thanks{Department of Pure and Applied Mathematics, Graduate School of Information Science and Technology, Osaka University, Suita, Osaka 565-0871, Japan.
E-mail:{\tt higashitani@ist.osaka-u.ac.jp}},
Kazuki Kurimoto\thanks{
Department of Mathematics, Graduate School of Science, Kyoto Sangyo University, Kyoto 603-8555, Japan.
E-mail:{\tt i1885045@cc.kyoto-su.ac.jp}
}
and 
Naoki Matsumoto\thanks{Research Institute for Digital Media and Content, Keio University, Yokohama, Kanagawa 232-0062, Japan.
E-mail:{\tt naoki.matsumo10@gmail.com}}
}
\date{}
\maketitle

\begin{abstract}
Recently, a new impartial game on a connected graph has been introduced, called a \textit{feedback game}, which is a variant of the generalized geography. 
In this paper, we study the feedback game on $3$-chromatic Eulerian triangulations of surfaces. 
We prove that the winner of the game on every $3$-chromatic Eulerian triangulation of a surface all of whose vertices have degree $0$ modulo $4$ is always fixed. 
Moreover, we also study the case of $3$-chromatic Eulerian triangulations of surfaces which have at least two vertices whose degrees are $2$ modulo $4$. 
In addition, as a concrete class of such graphs, we consider the octahedral path, which is obtained from an octahedron by adding octahedra in the same face,
and determine the winner of the game on those graphs. 
\end{abstract}

\noindent
{\bf Keywords:} 
Feedback game, Octahedron addition, Eulerian triangulation.

\medskip
\noindent
{\bf AMS 2010 Mathematics Subject Classification:} 05C57, 05C10.

\section{Introduction}
All graphs considered in this paper are finite simple undirected graphs. 
We say that a graph $G$ is \textit{Eulerian} if every vertex of $G$ has even degree. 
We refer the reader to \cite{D} for the basic terminologies. 

Recently, a new impartial game on a connected graph has been introduced, called a \textit{feedback game}.  
\begin{defn}[{\cite{MN}}]
There are two players, Alice and Bob. Alice starts the game. 
For a given connected graph $G$ with a starting vertex $s$, a token is put on $s$.
They alternately move the token on a vertex $u$ to another vertex $v$ which is adjacent to $u$ and the edge $uv$ will be deleted after he or she moves the token. 
The first player who is able to move the token back to the starting vertex $s$ or 
to an isolated vertex (after removing an edge used by the last move) wins the game. 
\end{defn}

The feedback game is a variant of the \textit{generalized geography}, 
which is a most famous two-player impartial game played on (directed) graphs.
It is known that for many variants of the game 
to determine the winner of the game is PSPACE-complete (see~\cite{FSU,LS,S} for example).
Similarly, the decision problem of the feedback game is PSPACE-complete
even if the maximal degree of a given connected graph is~3 
or it is~4 and the graph is Eulerian~\cite{MN}.
Thus it is an important problem
to determine the winner of the game on concrete classes of graphs.

In the recent paper~\cite{MN}, they focus on connected Eulerian graphs,
since the token is finally moved back to the starting vertex if a given connected graph is Eulerian.
It is trivial that Bob wins the game if a given connected Eulerian graph is bipartite, and hence, they deal with non-bipartite ones in that paper.
In particular, 
they gave sufficient conditions for Bob to win the game on two concrete classes of such graphs, triangular grid graphs and toroidal grid graphs,
and they also conjectured that those conditions are necessary.

One of challenging problems in this study is to determine the winner of the game on $3$-chromatic graphs.
In this paper, we focus on the feedback game on $3$-chromatic Eulerian triangulations of surfaces 
and determine the winner of the game,
where a \textit{triangulation} is a graph of a surface with each face triangular and every two faces sharing at most one edge.
(In this paper, a \textit{surface} means a closed surface, i.e., a 2-dimensional manifold without boundary.)
Let $c_{n,k}(G)$ be the number of vertices of a graph $G$ whose degrees are congruent to $k$ modulo $n$; note that $G$ satisfies $|V(G)|=c_{4,0}(G)+c_{4,2}(G)$ if $G$ is Eulerian. 

We begin with the following general result and a remark for the game on Eulerian triangulations.
\begin{thm}\label{thm:deg4}
Let $G$ be a $3$-chromatic Eulerian triangulation of a surface. 
If $c_{4,2}(G)=0$, then Bob always wins the game on $G$ with any starting vertex.
\end{thm}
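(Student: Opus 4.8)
The plan is to give Bob an explicit \emph{pairing} strategy governed by a proper $3$-colouring of $G$. Before that I would record two easy preliminaries. First, since $G$ is connected and Eulerian, the token is never stuck at a vertex $v\neq s$: whenever the token sits at such a $v$ the number of already-deleted edges at $v$ is odd (one for the move that just arrived, two for every earlier completed visit), while $\deg_G(v)$ is even, so a free edge always remains; and for the same reason no move can make the vertex it moves \emph{to} isolated unless that vertex is $s$. Hence every play is finite and ends exactly when a move returns the token to $s$, and the winner is whoever makes that move. Second, fix a proper $3$-colouring $\chi\colon V(G)\to\{0,1,2\}$, which exists since $G$ is $3$-chromatic. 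Because $G$ triangulates a surface, the link of each vertex $v$ is a cycle, and since each facial triangle carries all three colours, the colours around this link cycle alternate between the two colours $\neq\chi(v)$; thus $v$ has exactly $\deg_G(v)/2$ neighbours in each of the two classes $V_j$ with $j\neq\chi(v)$. The hypothesis $c_{4,2}(G)=0$ says $\deg_G(v)\equiv 0\pmod{4}$ for every $v$, so each of these counts is even: for every vertex $v$ and every colour $j\neq\chi(v)$ the edges from $v$ to $V_j$ can be split into pairs.

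Fix such a pairing at every vertex once and for all. Bob's strategy is: on his turn the token has just been pushed by Alice along an edge $e$ into a vertex $v$, and Bob pushes it out of $v$ along the partner of $e$ in the pairing at $v$. The heart of the argument is a single induction on the number of moves establishing two things simultaneously: (i) this reply is always available, and (ii) after move $m$ the token occupies a vertex of colour $\chi(s)$ if and only if $m$ is even. Granting (ii) up to move $m$, every vertex of colour $\neq\chi(s)$ has been visited only at odd times, i.e.\ it is only ever entered by a move of Alice and left by the immediately following move of Bob; hence the edges at such a vertex are consumed in complete pairs of the chosen pairing, so when Bob needs the partner of Alice's last edge it is still present — this gives (i) for move $m+1$, and since that partner edge joins $v$ to $V_{\chi(s)}$ it re-establishes (ii) for move $m+1$. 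Now conclude: $s$ lies in $V_{\chi(s)}$, so the token can return to $s$ only at an even time, so the terminating move is always Bob's; and Alice can never win earlier, since her moves always go from a vertex of colour $\chi(s)$ to one of a different colour and so never land on $s$, while no move can isolate a vertex other than $s$. Therefore Bob wins from every starting vertex.

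The step I expect to be the real obstacle is precisely the simultaneous induction (i)$+$(ii): the legality of Bob's reply at $v$ rests on the edges at $v$ having been used two at a time, which holds only because $v$ is visited solely on Alice's turns, which is the colour-parity statement the strategy is meant to preserve — so the two claims have to be proved together, with the colour claim for moves $\le m$ feeding the legality of move $m+1$ and, conversely, move $m+1$ re-establishing the colour claim. The remaining ingredients — finiteness and termination of the game on connected Eulerian graphs, the non-isolation of non-$s$ vertices, and the alternation of colours around each vertex link in a surface triangulation — are routine and I would dispatch them quickly.
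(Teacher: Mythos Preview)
Your proof is correct and follows essentially the same approach as the paper: the key observation in both is that the colour class $S$ containing $s$ has the property that every vertex outside $S$ sees an even number of vertices of $S$ (by the link-cycle argument you give), and Bob's strategy is simply to return the token to $S$ after each of Alice's moves. The paper packages this by identifying $S$ as an \emph{even kernel} and invoking the general fact (Remark~\ref{rem:even-kernel}) that an even kernel guarantees a win for Bob, whereas you unpack that same strategy explicitly via the local edge-pairings and the simultaneous induction; the underlying content is identical.
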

\begin{rem}\label{rem:deg4}
It is well known that every Eulerian triangulation of the sphere is $3$-chromatic. 
Moreover, we can see by the Handshaking Lemma that 
we have $c_{4,2}(G) \geq 2$ for every $3$-chromatic Eulerian triangulation $G$ of a surface with $c_{4,2}(G) \neq 0$
and that if $c_{4,2}(G) = 2$, then the two such vertices belong to the same partite set of $G$.
\end{rem}

By the above theorem and remark,
we mainly deal with Eulerian triangulations $G$ of surfaces with $c_{4,2}(G) \geq 2$.
It is relatively easy to see that Bob wins the game on every Eulerian \textit{double wheel}, 
which is an Eulerian triangulation on the sphere with $n \geq 6$ vertices and degree sequence $(4,4,\dots,4,n-2,n-2)$,
for any starting vertex (Proposition~\ref{prop:dw}).
On the other hand, we find Eulerian triangulations on which Alice wins the game, as follows.

\begin{thm}\label{thm:c_{4,2}=m}
For any surface $F^2$ and any positive integer $m \geq 2$, 
there exists a $3$-chromatic Eulerian triangulation $G$ of $F^2$ 
with $c_{4,2}(G)=m$ and a starting vertex such that Alice wins the game on $G$. 
\end{thm}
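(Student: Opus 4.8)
The plan is to assemble the required triangulation out of octahedra and then to supply Alice with an explicit winning strategy on it. The basic move is \emph{octahedron addition}: given a triangulation and one of its faces $uvw$, replace that face by a copy of the octahedron having $uvw$ as a face (equivalently, glue an ``octahedron minus a face'' into $uvw$ along its boundary). This preserves the surface, preserves the Eulerian property, and preserves $3$-chromaticity, since the octahedron is $3$-chromatic and every face of a $3$-chromatic triangulation is rainbow, so the obvious colouring extends. The only effect on the degree sequence is that $\deg u, \deg v, \deg w$ each increase by $2$ while three new vertices of degree $4$ appear; in particular one octahedron addition changes $c_{4,2}$ by $3 - 2t$, where $t \in \{0,1,2,3\}$ is the number of vertices of $uvw$ whose degree is currently $\equiv 2 \pmod 4$.

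First I would fix a $3$-chromatic Eulerian triangulation $T_0$ of the given surface $F^2$ (the octahedron when $F^2 = S^2$, and some such triangulation otherwise --- one exists for every surface), and note that every triangulation obtained from $T_0$ by octahedron additions again triangulates $F^2$. Using the formula above, a routine bookkeeping argument (choosing the faces of the successive additions with care) shows that every value $m \ge 2$ of $c_{4,2}$ is attainable: one can always raise $c_{4,2}$ by $3$ (by an addition on a face whose three vertices currently have degree $\equiv 0 \pmod 4$, for instance the central face of the octahedron just added) or by $1$ (on a face meeting exactly one vertex of degree $\equiv 2 \pmod 4$), and one can arrange faces that lower it; the small cases are checked by hand, the case $m = 2$ being set up, in accordance with Remark~\ref{rem:deg4}, so that the two resulting vertices of degree $\equiv 2 \pmod 4$ lie in the same partite set.

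While carrying out these additions I would also glue a fixed small cluster of octahedra around a chosen vertex $s$, so that near $s$ the triangulation $G$ looks like a prescribed gadget, and I would choose $s$ --- a vertex of degree $4$ when $m$ is even, a vertex of degree $\equiv 2 \pmod 4$ when $m$ is odd --- so that the parity count below falls in Alice's favour. Alice's strategy is a designated first move $s \to v$ followed by a pairing strategy: she fixes in advance a pairing of $E(G) \setminus \{sv, su\}$, where $su$ is a designated ``closing'' edge at $s$, into pairs of mutually adjacent edges, chosen so that whenever Bob moves the token into a vertex $y$ along an edge $f$ the partner of $f$ is incident with $y$; she always replies with that partner, and when no paired move remains she returns the token to $s$ along $su$. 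The point is a parity identity: the closed trail $T$ that the play traces out satisfies $\deg_T(s) = 2$ and $\deg_T(x)$ even for every vertex $x$, so its length equals $1 + \#\{x \ne s : \deg_T(x) \equiv 2 \pmod 4\}$, and the gadget together with the pairing are designed so that Alice keeps this quantity even, i.e.\ forces an odd-length trail, and wins.

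The substantive work is in making the strategy of the previous paragraph actually go through: choosing the pairing so that it is consistently ``Alice's move at the shared vertex,'' checking that it survives the successive edge deletions, and --- the real obstacle --- ruling out that Bob can defeat it by exploiting the parts of $G$ far from $s$, namely the octahedra inserted to tune $c_{4,2}$ and the base triangulation $T_0$. This cannot be waved away, since a detour into an added octahedron can have odd length (for instance $u \to q \to r \to u$ through its three new vertices), so the gadget and pairing must be arranged so that every parity-changing detour available to Bob is answered by an Alice reply of the opposite parity; the analysis near $s$ then reduces to a finite check of how the first and last few moves can go, and one must also confirm that the strategy does not depend on which $T_0$ was chosen for the given surface. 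I expect this verification to be where essentially all of the difficulty lies, the construction and the surrounding $c_{4,2}$-bookkeeping being routine by comparison.
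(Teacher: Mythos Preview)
Your proposal does not yet contain a proof; the crucial strategy step is only sketched, and as sketched it does not work.

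The parity identity you state is correct: for any closed trail $T$ with $\deg_T(s)=2$ one has $|E(T)|\equiv 1+\#\{x\ne s:\deg_T(x)\equiv 2\pmod 4\}\pmod 2$. But this is an identity satisfied by \emph{every} play of the game, not a lever Alice can pull. Both players influence which vertices end up with $\deg_T\equiv 2$, and you give no mechanism by which Alice alone forces that count to be even. Your ``pairing strategy'' is meant to supply that mechanism, but it is not well-defined: an edge $uv$ has two endpoints, so a single undirected pairing cannot guarantee that the partner of $f$ is incident with the vertex Bob has just reached, regardless of direction of travel. What you would need is a transition system (a pairing of half-edges at each vertex), and constructing one that both (i) is always Alice's move at the shared vertex and (ii) globally forces the right parity is precisely the problem you defer to ``the real obstacle''. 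You are honest that this is where all the difficulty lies; but that means the proposal is a plan, not a proof.

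The paper's argument avoids this global parity analysis entirely. It first reduces to the sphere by pasting the spherical example onto a $3$-chromatic Eulerian triangulation of $F^2$ with $c_{4,2}=0$. On the sphere it exhibits one explicit triangulation $G$ with $c_{4,2}(G)=2$ and a starting vertex $s$ on which Alice wins: after Alice's forced first move $s\to u_1$, the set $\{u_1,v_1,v_2\}$ behaves like an even kernel for the remaining graph, so Alice (now playing second) can always return the token to that set, eventually forcing Bob to hand her a neighbour of $s$. The key observation is that a few faces of $G$ --- those inside the triangle $t_1t_2t_3$ and the two faces $bu_3t_2$, $bt_2t_3$ --- are never touched by this strategy, so one may perform octahedron additions there freely to tune $c_{4,2}$ to any $m\ge 2$ without disturbing Alice's play. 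This ``build one explicit gadget, then modify it only in irrelevant regions'' idea is what your proposal is missing; your plan instead tries to control the entire graph at once, which is why the verification balloons.
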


Thus we study one of reasonable concrete classes of Eulerian triangulations,
called an \textit{octahedral path}, denoted by $E_n$. (See Figure~\ref{fig1}.) This graph is obtained from the octahedron (i.e., $E_1$) 
by repeatedly adding a new octahedron in the same face, which way looks like a path. 
By the construction,
the degree sequence of $E_n$ is $(4,4,4,4,4,4, \underbrace{6,6,\dots,6}_{3(n-1)})$.
Note that every Eulerian triangulation of the sphere can be constructed from $E_1$ by two operations, one of which is the octahedron addition. 
(The \textit{octahedron addition} is to add three vertices $a_1,a_2,a_3$ to a face $u_1,u_2,u_3$ of an Eulerian triangulation 
and edges so that $a_iu_ju_k$ and $a_ia_ju_k$ are faces of the resulting graph for $\{i,j,k\} = \{1,2,3\}$. See~\cite[Theorem 3]{B} for more detail.)
\begin{figure}[htb]
\centering
\unitlength 0.1in
\begin{picture}(39.6500, 28.4000)(2.8500,-32.5000)
%
\special{pn 8}%
\special{sh 1.000}%
\special{ar 2400 600 50 50  0.0000000 6.2831853}%
%
\special{pn 8}%
\special{sh 1.000}%
\special{ar 600 3200 50 50  0.0000000 6.2831853}%
%
\special{pn 8}%
\special{sh 1.000}%
\special{ar 4200 3200 50 50  0.0000000 6.2831853}%
%
\special{pn 13}%
\special{pa 2400 600}%
\special{pa 600 3200}%
\special{fp}%
%
\special{pn 13}%
\special{pa 600 3200}%
\special{pa 4200 3200}%
\special{fp}%
%
\special{pn 13}%
\special{pa 4200 3200}%
\special{pa 2400 600}%
\special{fp}%
%
\special{pn 8}%
\special{pa 2400 600}%
\special{pa 2400 1400}%
\special{fp}%
%
\special{pn 8}%
\special{pa 1400 2800}%
\special{pa 600 3200}%
\special{fp}%
%
\special{pn 8}%
\special{pa 3400 2800}%
\special{pa 4200 3200}%
\special{fp}%
%
\special{pn 8}%
\special{pa 2400 2000}%
\special{pa 2200 2400}%
\special{fp}%
%
\special{pn 8}%
\special{pa 2200 2400}%
\special{pa 2600 2400}%
\special{fp}%
%
\special{pn 8}%
\special{pa 2600 2400}%
\special{pa 2400 2000}%
\special{fp}%
%
\special{pn 8}%
\special{pa 2400 2000}%
\special{pa 2400 1800}%
\special{fp}%
%
\special{pn 8}%
\special{pa 2400 600}%
\special{pa 3800 3000}%
\special{fp}%
%
\special{pn 8}%
\special{pa 4200 3200}%
\special{pa 1000 3000}%
\special{fp}%
%
\special{pn 8}%
\special{pa 600 3200}%
\special{pa 2400 1000}%
\special{fp}%
%
\special{pn 13}%
\special{pa 2400 1000}%
\special{pa 3800 3000}%
\special{fp}%
%
\special{pn 13}%
\special{pa 3800 3000}%
\special{pa 1000 3000}%
\special{fp}%
%
\special{pn 13}%
\special{pa 1000 3000}%
\special{pa 2400 1000}%
\special{fp}%
%
\special{pn 8}%
\special{pa 1000 3000}%
\special{pa 2400 1400}%
\special{fp}%
%
\special{pn 8}%
\special{pa 2400 1600}%
\special{pa 2400 1400}%
\special{fp}%
%
\special{pn 8}%
\special{pa 2200 2400}%
\special{pa 2000 2500}%
\special{fp}%
%
\special{pn 8}%
\special{pa 2800 2500}%
\special{pa 2600 2400}%
\special{fp}%
%
\special{pn 8}%
\special{sh 1.000}%
\special{ar 2400 1000 50 50  0.0000000 6.2831853}%
%
\special{pn 8}%
\special{sh 1.000}%
\special{ar 2400 1400 50 50  0.0000000 6.2831853}%
%
\special{pn 8}%
\special{sh 1.000}%
\special{ar 2400 2000 50 50  0.0000000 6.2831853}%
%
\special{pn 8}%
\special{sh 1.000}%
\special{ar 2200 2400 50 50  0.0000000 6.2831853}%
%
\special{pn 8}%
\special{sh 1.000}%
\special{ar 2600 2400 50 50  0.0000000 6.2831853}%
%
\special{pn 8}%
\special{sh 1.000}%
\special{ar 3800 3000 50 50  0.0000000 6.2831853}%
%
\special{pn 8}%
\special{sh 1.000}%
\special{ar 1000 3000 50 50  0.0000000 6.2831853}%
%
\special{pn 8}%
\special{sh 1.000}%
\special{ar 1400 2800 50 50  0.0000000 6.2831853}%
%
\special{pn 8}%
\special{sh 1.000}%
\special{ar 3400 2800 50 50  0.0000000 6.2831853}%
%
\special{pn 8}%
\special{pa 3400 2800}%
\special{pa 3100 2650}%
\special{fp}%
%
\special{pn 8}%
\special{pa 1700 2650}%
\special{pa 1400 2800}%
\special{fp}%
%
\special{pn 8}%
\special{pa 3800 3000}%
\special{pa 1400 2800}%
\special{fp}%
%
\special{pn 13}%
\special{pa 1400 2800}%
\special{pa 3400 2800}%
\special{fp}%
%
\special{pn 13}%
\special{pa 3400 2800}%
\special{pa 2400 1400}%
\special{fp}%
%
\special{pn 13}%
\special{pa 2400 1400}%
\special{pa 1400 2800}%
\special{fp}%
%
\special{pn 8}%
\special{pa 2400 1000}%
\special{pa 3400 2800}%
\special{fp}%
%
\special{pn 8}%
\special{pa 3400 2800}%
\special{pa 2650 2710}%
\special{da 0.070}%
%
\special{pn 8}%
\special{pa 1420 2790}%
\special{pa 1820 2390}%
\special{da 0.070}%
\special{pa 1820 2390}%
\special{pa 1820 2390}%
\special{da 0.070}%
%
\special{pn 8}%
\special{pa 2400 1400}%
\special{pa 2600 1800}%
\special{da 0.070}%
%
\special{pn 8}%
\special{pa 2400 1600}%
\special{pa 2400 1800}%
\special{dt 0.045}%
%
\special{pn 8}%
\special{pa 2000 2490}%
\special{pa 1700 2650}%
\special{dt 0.045}%
%
\special{pn 8}%
\special{pa 3100 2650}%
\special{pa 2800 2500}%
\special{dt 0.045}%
%
\special{pn 8}%
\special{pa 2400 2000}%
\special{pa 2200 2200}%
\special{da 0.070}%
%
\special{pn 8}%
\special{pa 2200 2400}%
\special{pa 2410 2500}%
\special{da 0.070}%
%
\special{pn 8}%
\special{pa 2610 2400}%
\special{pa 2580 2140}%
\special{da 0.070}%
\put(24.8500,-4.9500){\makebox(0,0){$u_0$}}%
\put(25.0000,-9.2000){\makebox(0,0){$u_1$}}%
\put(24.9000,-13.1000){\makebox(0,0){$u_2$}}%
\put(25.2000,-19.1000){\makebox(0,0){$u_n$}}%
\put(5.1000,-33.0000){\makebox(0,0){$v_0$}}%
\put(43.3000,-32.9000){\makebox(0,0){$w_0$}}%
\put(10.9000,-31.0000){\makebox(0,0){$v_1$}}%
\put(14.9000,-29.0000){\makebox(0,0){$v_2$}}%
\put(22.7000,-25.3000){\makebox(0,0){$v_n$}}%
\put(27.2000,-23.2000){\makebox(0,0){$w_n$}}%
\put(34.7000,-27.1000){\makebox(0,0){$w_2$}}%
\put(38.9000,-29.0000){\makebox(0,0){$w_1$}}%
\end{picture}
\caption{Octahedral path $E_n$}
\label{fig1}
\end{figure}
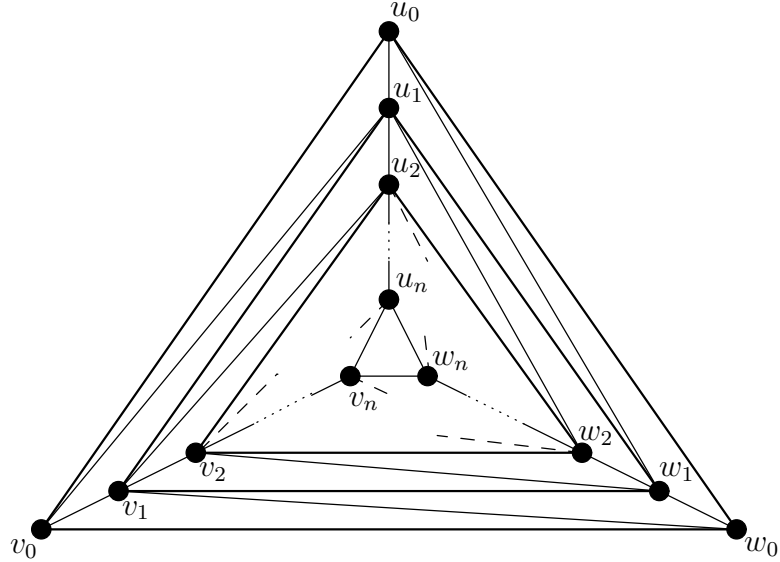

The feedback game on the graph $E_n$ with its starting vertex $u_p$ or $v_p$ or $w_p$ is denoted by $E(n,p)$, where $n\geq 1$ and $0 \leq p \leq n$. 
Notice that the choice of $u_p,v_p,w_p$ does not matter. 
We give the complete characterization whether Alice or Bob wins the game on $E(n,p)$:
\begin{thm}\label{thm:octapath}
For any integers $n$ and $p$ with $n\geq 1$ and $0 \leq p \leq n$, 
the winner of the game $E(n,p)$ is determined as follows: 
\begin{center}
\begin{tabular}{c|c|c|c} 
             &$n \equiv 0$ &$n \equiv 1$ &$n \equiv 2$ \\ \hline
$p \equiv 0$ &Alice &Bob &Alice \\ \hline
$p \equiv 1$ &Alice &Bob &Bob \\ \hline
$p \equiv 2$ &Alice &Alice &Alice \\ 
\end{tabular}
\end{center}
Here, all ``$\equiv$'' mean that ``congruent modulo $3$''.
\end{thm}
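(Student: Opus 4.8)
The plan is to replace the feedback game by a simpler equivalent game, cut the number of cases down by symmetry, and then reduce to a finite check by an induction that adds three octahedra at a time.

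Since $E_n$ is connected and Eulerian, a parity count on the token's trail shows that at every stage of the game the remaining graph has exactly two vertices of odd degree, namely $s$ and the current position of the token; hence the token can always be moved, $s$ and the token always lie in the same component, and the game must end with the token back at $s$ (the ``isolated vertex'' ending cannot occur on an Eulerian graph unless the isolated vertex is $s$ itself). Thus $E(n,p)$ is equivalent to the finite game in which the token performs an edge-deleting trail out of $s$ and the first player to bring it back to $s$ wins, so Alice wins $E(n,p)$ exactly when she can force this trail to have odd length. I would then use the automorphism of $E_n$ cyclically permuting the colour classes $U=\{u_i\}$, $V=\{v_i\}$, $W=\{w_i\}$ to take $s=u_p$, and the reflection $i\mapsto n-i$ to get $E(n,p)\cong E(n,n-p)$. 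Writing $a$ and $b$ for the residues modulo $3$ of $p$ and of $n-p$ --- the lengths of the two ``arms'' of the path on either side of the triangle $T_p=\{u_p,v_p,w_p\}$ holding $s$ --- one checks that the statement is invariant under $a\leftrightarrow b$ (matching the reflection), and that it asserts exactly: Bob wins if and only if $\{a,b\}\subseteq\{0,1\}$ and $1\in\{a,b\}$.

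The core is then an \emph{octahedron-triple lemma}: for any octahedral path with a prescribed start, splicing three further octahedra onto the far end of either arm (keeping $s$ in place, up to the renumbering of indices) does not change the winner. Granting this, every $E(n,p)$ with $n\ge4$ other than $E(4,2)$ has an arm of length $\ge3$ and reduces to some $E(n-3,\cdot)$, so the theorem follows from the lemma together with the base cases $E(n,p)$ with $n\le3$ and the single case $E(4,2)$ --- all small enough to settle by direct inspection of the game tree, with $n=1$ already covered by Theorem~\ref{thm:deg4}. To prove the lemma I would have the player $X$ who wins the shorter game transplant that strategy onto the longer path, played on all octahedra except the three new ones, and whenever the opponent drives the token across the attaching triangle into those three octahedra, have $X$ reply by a fixed local pairing strategy inside that three-octahedron block, chosen compatibly with the interface and with whichever boundary edges have already been used, so that each opponent move in the block is answered in the block and the token is eventually returned across the attaching triangle with the turn unchanged; $X$ then resumes the transplanted strategy. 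A three-octahedron excursion can be made turn-neutral in this way whereas a one- or two-octahedron excursion cannot, and this is what produces the period $3$ in both $n$ and $p$.

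The principal obstacle is precisely this lemma. One must handle excursions that re-cross the attaching triangle several times and may be interleaved with play on the rest of the path, show that the required edge-pairing of three consecutive octahedra exists regardless of which (bounded) set of interface edges has already been spent --- reconciling this with the fact that the transplanted strategy itself expects those interface edges --- and confirm that the opponent can neither isolate the block nor use it to flip the turn parity. A secondary, bookkeeping-heavy point is the base cases: the arm decomposition is not perfectly clean, since an excursion into one arm can return to $T_p$ at a vertex different from the one it left, so the finite analysis for $n\le3$ and for $E(4,2)$ must track the token's current vertex together with the state of the three edges of $T_p$ in order to pin down all nine entries of the table.
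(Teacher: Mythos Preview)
Your approach is genuinely different from the paper's. The paper does not induct on $n$ at all: for each residue class of $(n,p)$ modulo $3$ it writes down either an explicit even kernel of $E_n$ (or a pair of even kernels glued along a three-vertex interface) certifying Bob's win, or an explicit first move for Alice that places her in control of such a kernel. No comparison between games of different sizes is ever made.

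Your octahedron-triple lemma, by contrast, has a real gap. Write $B$ for the three new octahedra with the three edges of the attaching triangle $T=\{u_n,v_n,w_n\}$ deleted (those edges belong to the short game). It is true that $T$ is an even kernel of $B$, so $X$ can answer any opponent move into $B$ by a move back to $T$, keeping the excursion length even. The trouble is \emph{where} in $T$ the token lands: each vertex of $T$ has exactly two $B$-edges, and if the opponent plays $u_n\to u_{n+1}$ then the only even-kernel reply is $u_{n+1}\to v_n$, so the opponent can force the token to re-emerge at a \emph{different} vertex of $T$. The transplanted short-game strategy, however, records a specific current vertex, and for generic $p$ there is no automorphism of $E_n$ fixing $s=u_p$ that permutes the vertices of $T$, so one cannot simply relabel. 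You flag exactly this return-to-a-different-vertex phenomenon for the starting triangle $T_p$ in the base-case analysis, but it bites equally at the far-end attaching triangle in the inductive step, and there it breaks the transplant as stated. Repairing this would require a much stronger inductive invariant --- essentially the outcome of the game started from each vertex of the end triangle under every partial consumption of the interface edges --- and you have not argued that such an invariant is preserved when three octahedra are appended. Absent that, the lemma remains unproven, and the reduction to finitely many base cases does not go through.
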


In the next section, we introduce an important concept for the feedback game, 
called an \textit{even kernel}\/, 
and we prove Theorems~\ref{thm:deg4} and~\ref{thm:c_{4,2}=m} and Proposition~\ref{prop:dw}.
In Section~\ref{sec:octa},
we shall give a proof of Theorem~\ref{thm:octapath}.


\section{$3$-chromatic Eulerian triangulations}\label{matsumoto}

We first introduce the key notion, called an \textit{even kernel}\/, which is first introduced in~\cite{FSU}.

\begin{defn}[Even kernel~\cite{FSU}]
Let $G$ be a connected graph with a starting vertex $s$. 
An \textit{even kernel} for $G$ is a nonempty subset $S \subset V(G)$ such that 
\begin{itemize}
\item $s \in S$,
\item no two vertices in $S$ are adjacent, and 
\item every vertex not in $S$ is adjacent to an even number (possibly $0$) of vertices in $S$. 
\end{itemize}
\end{defn}

\begin{defn}[{Even kernel graph~\cite{MN}}]
Let $S$ be an even kernel of a connected Eulerian graph $G$ with a starting vertex. 
An \textit{even kernel graph} with respect to $S$ is 
a bipartite subgraph $H_S$ with the bipartition $V(H_S)=S \cup R$ 
and $E(H_S)=E_G(S,R)$, 
where $R$ is a superset of the set 
$N_G(S) = \{v \in V(G) \setminus S : v $ is adjacent to a vertex $u \in S\}$
and $E_G(A,B)$ denotes the set of edges between $A$ and $B$. 
\end{defn}

\begin{rem}\label{rem:even-kernel}
It is easy to see that if a connected graph $G$ with some starting vertex 
has an even kernel, then Bob wins the feedback game on $G$ 
since Bob can always move a token from a vertex not in $S$ back to a vertex in $S$ \cite{FSU, MN}. 
As a corollary, Bob also wins the game on every connected Eulerian bipartite graph.
It is NP-complete in general to find an even kernel of a given graph~\cite{F}.
Furthermore, 
there exist infinitely many connected Eulerian graphs without an even kernel (for a specified starting vertex)
on which Bob wins the game~\cite{MN}.
\end{rem}

Now we shall prove Theorems~\ref{thm:deg4} and~\ref{thm:c_{4,2}=m}.

\begin{proof}[Proof of Theorem~\ref{thm:deg4}]
Let $G$ be a $3$-chromatic Eulerian triangulation on a surface and let $(S_1,S_2,S_3)$ be the tripartite set of $G$, 
i.e., $V(G) = \bigcup^3_{i=1}S_i$ and $S_i \cap S_j = \emptyset$ for $i\neq j$. 
Without loss of generality, the starting vertex $s$ is in $S_1$. 
Since the degree of each vertex of $G$ is zero modulo~$4$, 
each vertex in $S_i$ is adjacent to an even number of vertices in $S_j$ for any $i,j \in \{1,2,3\}$ with $i \neq j$. 
(Note that the subgraph induced by the neighbors of each vertex in $G$ is an even cycle of length $4k$ for some $k\geq 1$.) 
Therefore, $S_1$ is clearly an even kernel. 
\end{proof}

\begin{proof}[Proof of Theorem~\ref{thm:c_{4,2}=m}] 
\noindent
{\bf (The first step)}: 
It suffices to prove the theorem for the sphere, since we can construct the desired Eulerian triangulation on a given surface $F^2$
by ``pasting" a face of the desired one on the sphere (constructed below)
and that of any $3$-chromatic Eulerian triangulation on $F^2$ 
with only vertices of degree~$4k$ with $k \geq 1$. 

\begin{rem}
Let $G$ be a $3$-chromatic Eulerian triangulation on $F^2$ and let $uvw$ and $uvw'$ be two faces of $G$.
A \textit{2-subdivision} of an edge $uv$ of $G$ is to replace $uv$ with a path $uabv$ and to add edges $aw,bw,aw',bw'$.
Note that $w$ and $w'$ belong to the same partite set of $G$
and that the remainder of their degrees modulo~4 is changed after this operation.
The octahedron addition also changes the remainder modulo~$4$
of degrees of three vertices on the corresponding face boundary.
Thus applying octahedron additions and 2-subdivisions suitably, 
we can construct a $3$-chromatic Eulerian triangulation on $F^2$ with only vertices of degree~$4k$ with $k \geq 1$ from another by Remark~\ref{rem:deg4}. 
\end{rem}

Furthermore, for any $m \equiv 0 \pmod{3}$ with $m\geq 3$, the octahedral path $E_n$ with $n \geq 2$ is the desired graph. See Theorem~\ref{thm:octapath}. 

Hence, we shall construct the desired graph for any $m \geq 2$ with $m \equiv 1,2 \pmod{3}$ in the second and third steps. 

\noindent 
{\bf (The second step)}: 
Let $G$ be the Eulerian triangulation on the sphere shown in the left of Figure~\ref{basetri}.
Note that exactly two vertices $u_1$ and $b$ are of degree~$6$
and other vertices have degree $4k$ for some $k \geq 1$ (i.e., $c_{4,2}(G)=2$).
We will show that Alice can win the game on $G$ with the starting vertex $s$.

\begin{figure}[htb]
\centering
\unitlength 0.1in
\begin{picture}( 63.0000, 23.4500)(  5.4500,-34.5000)
%
\special{pn 8}%
\special{sh 1.000}%
\special{ar 800 3400 50 50  0.0000000 6.2831853}%
%
\special{pn 8}%
\special{sh 1.000}%
\special{ar 3600 3400 50 50  0.0000000 6.2831853}%
%
\special{pn 8}%
\special{sh 1.000}%
\special{ar 2200 1200 50 50  0.0000000 6.2831853}%
%
\special{pn 8}%
\special{pa 3600 3400}%
\special{pa 800 3400}%
\special{fp}%
%
\special{pn 8}%
\special{pa 800 3400}%
\special{pa 2200 1200}%
\special{fp}%
%
\special{pn 8}%
\special{pa 2200 1200}%
\special{pa 3600 3400}%
\special{fp}%
%
\special{pn 8}%
\special{pa 3600 3400}%
\special{pa 2200 3200}%
\special{fp}%
%
\special{pn 8}%
\special{pa 2200 3200}%
\special{pa 800 3400}%
\special{fp}%
%
\special{pn 8}%
\special{pa 800 3400}%
\special{pa 1600 2400}%
\special{fp}%
%
\special{pn 8}%
\special{pa 1600 2400}%
\special{pa 2200 1200}%
\special{fp}%
%
\special{pn 8}%
\special{pa 2200 1200}%
\special{pa 2800 2400}%
\special{fp}%
%
\special{pn 8}%
\special{pa 2800 2400}%
\special{pa 3600 3400}%
\special{fp}%
%
\special{pn 8}%
\special{pa 2800 2400}%
\special{pa 2200 3200}%
\special{fp}%
%
\special{pn 8}%
\special{pa 2200 3200}%
\special{pa 1600 2400}%
\special{fp}%
%
\special{pn 8}%
\special{pa 1600 2400}%
\special{pa 2800 2400}%
\special{fp}%
%
\special{pn 8}%
\special{pa 2200 2600}%
\special{pa 2000 2800}%
\special{fp}%
%
\special{pn 8}%
\special{pa 2000 2800}%
\special{pa 2400 2800}%
\special{fp}%
%
\special{pn 8}%
\special{pa 2400 2800}%
\special{pa 2200 2600}%
\special{fp}%
%
\special{pn 8}%
\special{pa 2200 2600}%
\special{pa 1600 2400}%
\special{fp}%
%
\special{pn 8}%
\special{pa 1600 2400}%
\special{pa 2000 2800}%
\special{fp}%
%
\special{pn 8}%
\special{pa 2000 2800}%
\special{pa 2200 3200}%
\special{fp}%
%
\special{pn 8}%
\special{pa 2200 3200}%
\special{pa 2400 2800}%
\special{fp}%
%
\special{pn 8}%
\special{pa 2400 2800}%
\special{pa 2800 2400}%
\special{fp}%
%
\special{pn 8}%
\special{pa 2800 2400}%
\special{pa 2200 2600}%
\special{fp}%
%
\special{pn 8}%
\special{pa 2200 2200}%
\special{pa 2000 2000}%
\special{fp}%
%
\special{pn 8}%
\special{pa 2000 2000}%
\special{pa 2400 2000}%
\special{fp}%
%
\special{pn 8}%
\special{pa 2400 2000}%
\special{pa 2200 2200}%
\special{fp}%
%
\special{pn 8}%
\special{pa 2200 2200}%
\special{pa 1600 2400}%
\special{fp}%
%
\special{pn 8}%
\special{pa 2800 2400}%
\special{pa 2200 2200}%
\special{fp}%
%
\special{pn 8}%
\special{pa 2000 2000}%
\special{pa 1600 2400}%
\special{fp}%
%
\special{pn 8}%
\special{pa 2000 2000}%
\special{pa 2200 1200}%
\special{fp}%
%
\special{pn 8}%
\special{pa 2200 1200}%
\special{pa 2400 2000}%
\special{fp}%
%
\special{pn 8}%
\special{pa 2400 2000}%
\special{pa 2800 2400}%
\special{fp}%
%
\special{pn 8}%
\special{sh 1.000}%
\special{ar 2200 3200 50 50  0.0000000 6.2831853}%
%
\special{pn 8}%
\special{sh 1.000}%
\special{ar 2800 2400 50 50  0.0000000 6.2831853}%
%
\special{pn 8}%
\special{sh 1.000}%
\special{ar 1600 2400 50 50  0.0000000 6.2831853}%
%
\special{pn 8}%
\special{sh 1.000}%
\special{ar 2000 2000 50 50  0.0000000 6.2831853}%
%
\special{pn 8}%
\special{sh 1.000}%
\special{ar 2400 2000 50 50  0.0000000 6.2831853}%
%
\special{pn 8}%
\special{sh 1.000}%
\special{ar 2200 2200 50 50  0.0000000 6.2831853}%
%
\special{pn 8}%
\special{sh 1.000}%
\special{ar 2200 2600 50 50  0.0000000 6.2831853}%
%
\special{pn 8}%
\special{sh 1.000}%
\special{ar 2000 2800 50 50  0.0000000 6.2831853}%
%
\special{pn 8}%
\special{sh 1.000}%
\special{ar 2400 2800 50 50  0.0000000 6.2831853}%
\put(6.8000,-33.0000){\makebox(0,0){$s$}}%
%
\special{pn 8}%
\special{sh 1.000}%
\special{ar 3996 3400 50 50  0.0000000 6.2831853}%
%
\special{pn 8}%
\special{sh 1.000}%
\special{ar 6796 3400 50 50  0.0000000 6.2831853}%
%
\special{pn 8}%
\special{sh 1.000}%
\special{ar 5396 1200 50 50  0.0000000 6.2831853}%
%
\special{pn 8}%
\special{sh 1.000}%
\special{ar 5996 2400 50 50  0.0000000 6.2831853}%
%
\special{pn 8}%
\special{sh 1.000}%
\special{ar 4796 2400 50 50  0.0000000 6.2831853}%
%
\special{pn 8}%
\special{sh 1.000}%
\special{ar 5196 2000 50 50  0.0000000 6.2831853}%
%
\special{pn 8}%
\special{sh 1.000}%
\special{ar 5596 2000 50 50  0.0000000 6.2831853}%
%
\special{pn 8}%
\special{sh 1.000}%
\special{ar 5396 2200 50 50  0.0000000 6.2831853}%
%
\special{pn 8}%
\special{sh 1.000}%
\special{ar 5196 2800 50 50  0.0000000 6.2831853}%
%
\special{pn 8}%
\special{sh 1.000}%
\special{ar 5596 2800 50 50  0.0000000 6.2831853}%
\put(38.7500,-33.0000){\makebox(0,0){$s$}}%
%
\special{pn 8}%
\special{pa 3990 3400}%
\special{pa 5350 3210}%
\special{fp}%
\special{sh 1}%
\special{pa 5350 3210}%
\special{pa 5282 3200}%
\special{pa 5298 3218}%
\special{pa 5288 3240}%
\special{pa 5350 3210}%
\special{fp}%
%
\special{pn 8}%
\special{pa 5200 2800}%
\special{pa 5400 3200}%
\special{fp}%
%
\special{pn 8}%
\special{pa 5200 2800}%
\special{pa 5400 2600}%
\special{fp}%
%
\special{pn 8}%
\special{pa 5400 2600}%
\special{pa 5600 2800}%
\special{fp}%
%
\special{pn 8}%
\special{pa 5600 2800}%
\special{pa 5400 3200}%
\special{fp}%
%
\special{pn 8}%
\special{pa 5400 3200}%
\special{pa 6000 2400}%
\special{fp}%
%
\special{pn 8}%
\special{pa 6000 2400}%
\special{pa 5400 2600}%
\special{fp}%
%
\special{pn 8}%
\special{pa 5400 3200}%
\special{pa 6730 3390}%
\special{dt 0.045}%
\special{sh 1}%
\special{pa 6730 3390}%
\special{pa 6668 3362}%
\special{pa 6678 3382}%
\special{pa 6662 3400}%
\special{pa 6730 3390}%
\special{fp}%
%
\special{pn 8}%
\special{sh 0}%
\special{ar 5400 3200 50 50  0.0000000 6.2831853}%
%
\special{pn 8}%
\special{sh 0}%
\special{ar 5400 2600 50 50  0.0000000 6.2831853}%
%
\special{pn 8}%
\special{pa 5346 2590}%
\special{pa 4866 2430}%
\special{dt 0.045}%
\special{sh 1}%
\special{pa 4866 2430}%
\special{pa 4922 2470}%
\special{pa 4916 2448}%
\special{pa 4936 2432}%
\special{pa 4866 2430}%
\special{fp}%
%
\special{pn 8}%
\special{pa 5336 3180}%
\special{pa 4836 2480}%
\special{dt 0.045}%
\special{sh 1}%
\special{pa 4836 2480}%
\special{pa 4858 2546}%
\special{pa 4866 2524}%
\special{pa 4890 2524}%
\special{pa 4836 2480}%
\special{fp}%
\put(22.9000,-33.0000){\makebox(0,0){$u_1$}}%
\put(15.8000,-25.7000){\makebox(0,0){$u_3$}}%
\put(27.9000,-25.9000){\makebox(0,0){$u_2$}}%
\put(22.0000,-24.8000){\makebox(0,0){$v_1$}}%
\put(20.0000,-26.8000){\makebox(0,0){$v_2$}}%
\put(24.0000,-26.8000){\makebox(0,0){$v_3$}}%
\put(23.5000,-11.9000){\makebox(0,0){$b$}}%
\put(36.2000,-35.3000){\makebox(0,0){$a$}}%
\put(19.4000,-18.7000){\makebox(0,0){$t_2$}}%
\put(22.0000,-23.2000){\makebox(0,0){$t_1$}}%
\put(22.8000,-18.9000){\makebox(0,0){$t_3$}}%
\end{picture}%
\caption{An Eulerian triangulation on the sphere with the starting vertex $s$ where Alice wins the game}
\label{basetri}
\end{figure}
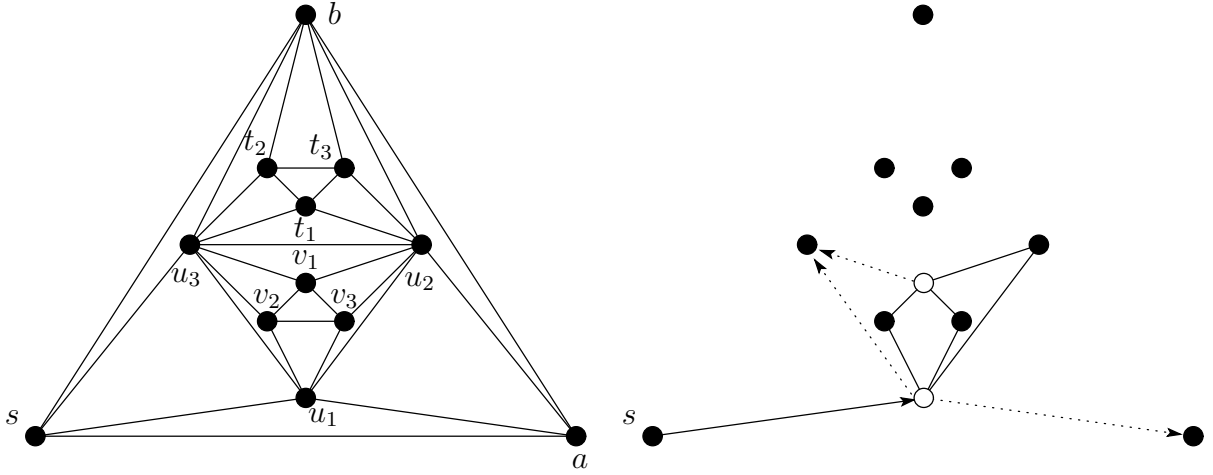

Alice first moves the token to $u_1$. 
After that, unless Bob moves the token to $a$ or $u_3$,
Alice moves it to a white vertex shown in the right of Figure~\ref{basetri}
corresponding to Bob's move.
As in the argument of an even kernel (see Remark~\ref{rem:even-kernel}),
Bob finally has to move the token to $a$ or $u_3$,
and hence, Alice wins the game.

\noindent
{\bf (The third step)}: 
Observe that four vertices $t_1,t_2,t_3$ and $b$ are not used for the above winning strategy of Alice in the second step. 
So, by repeatedly applying an octahedron addition to a face consisting three vertices of degree $4$ in the interior of $t_1t_2t_3$, 
we can increase the number of vertices of degree $6$ by $3r$ for $r\geq 1$ per one octahedron addition, 
keeping that Alice wins the game on the resulting graph. 
Thus the resulting graph is the desired graph for any $m \geq 2$ with $m \equiv 2 \pmod{3}$.

For any $m \geq 2$ with $m \equiv 1 \pmod{3}$, we construct the desired graph, as follows: 
First apply an octahedron addition to two faces $bu_3t_2$ and $bt_2t_3$ respectively. 
The resulting graph has four vertices $b,u_1,u_3$ and $t_3$ of degree $4k+2$ for some $k \geq 1$ and Alice still wins the game on the graph. 
After that, similar to the case when $m \equiv 2 \pmod{3}$, we can obtain the desired graphs 
by repeatedly applying an octahedron addition to a face consisting of three vertices of degree $4$ (in the interior of $bt_2t_3$).
\end{proof}

We conclude this section with the following result.

\begin{prop}\label{prop:dw}
Bob wins the game on every Eulerian double wheel graph. 
\end{prop}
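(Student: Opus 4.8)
The plan is to exhibit an even kernel for the Eulerian double wheel graph with an arbitrary starting vertex $s$, and then invoke Remark~\ref{rem:even-kernel}. Recall that the Eulerian double wheel $W$ on $n\geq 6$ vertices has two ``poles'' $x,y$ of degree $n-2$ and an even cycle $C=c_1c_2\cdots c_{n-2}c_1$ of rim vertices, each of degree $4$, with $x$ and $y$ both adjacent to every $c_i$; moreover $xy\notin E(W)$. (Since $W$ is $3$-chromatic Eulerian of the sphere, $n-2$ is even, so $C$ has even length; the two poles lie in the same partite class, and the rim $C$ is properly $2$-colored by the other two classes.)

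First I would handle the case where the starting vertex is a pole, say $s=x$. I claim $S=\{x,y\}$ is an even kernel: it contains $s$, its two vertices are non-adjacent (as $xy\notin E(W)$), and every rim vertex $c_i$ is adjacent to exactly $2$ vertices of $S$ (namely $x$ and $y$), which is even. Hence Bob wins. Next, suppose the starting vertex is a rim vertex, say $s=c_1$. Here I would take $S$ to be $\{c_1\}$ together with every other vertex of the rim cycle $C$: that is, $S=\{c_1,c_3,c_5,\dots,c_{n-3}\}$, one of the two color classes of the even cycle $C$. Then $s\in S$; no two vertices of $S$ are adjacent, since $S$ is an independent set in $C$ and the rim vertices are pairwise non-adjacent except along $C$; the complementary rim vertices $c_2,c_4,\dots,c_{n-2}$ each have exactly $2$ neighbors in $S$ (their two cycle-neighbors), which is even; and each pole $x,y$ is adjacent to all $(n-2)/2$ vertices of $S$, so I need $(n-2)/2$ to be even. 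This works precisely when $n\equiv 2\pmod 4$.

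The remaining case is $s$ a rim vertex with $n\equiv 0\pmod 4$, i.e.\ the rim cycle has length $\equiv 2\pmod 4$; this is where the subtlety lies and I expect it to be the main obstacle, since the ``obvious'' independent set of the rim now meets each pole an odd number of times. The fix is to move one pole into the kernel: take $S=\{x\}\cup\{c_2,c_4,\dots,c_{n-2}\}$ after relabeling so that $s=c_j$ for some even $j$ — equivalently, choose the color class of $C$ containing $s$ and add the pole $x$. (Note $x$ is not adjacent to any $c_i$? — no, $x$ is adjacent to all $c_i$, so this fails; instead one checks $x$ is non-adjacent to $y$ only.) More carefully: with $S=\{x\}\cup A$ where $A$ is the color class of $C$ containing $s$, adjacencies within $S$ would be the edges from $x$ to $A$, which are present, so this is not independent. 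Thus the genuinely correct construction in this case is $S=\{x,y\}\cup(\text{nothing})$ is unavailable since $s$ is a rim vertex; instead one should verify directly that $W$ with a rim start when $n\equiv 0\pmod 4$ still admits an even kernel by a parity-balancing argument, or argue that Bob wins without an even kernel (as Remark~\ref{rem:even-kernel} warns is sometimes necessary). Concretely, I would reconsider: take $S=\{c_1\}\cup\{c_4,c_6,\dots\}$ chosen as a maximal independent set of the path $C-c_1$ that leaves both neighbors of $c_1$ and exactly the right parity at the poles; a short case check on $n\bmod 4$ shows one of the two independent sets of $C$ containing $c_1$ always has even size, because $C$ has even length, so in fact the case $n\equiv 0\pmod 4$ does not arise for a rim start — the length of $C$ is $n-2\equiv 2\pmod 4$, its two color classes have sizes $(n-2)/2$ which is odd. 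Since neither rim color class has even intersection with both poles, I would instead pair a pole with a rim color class of the \emph{other} subdivided structure; failing a clean kernel, the honest route is a direct strategy-stealing/pairing argument for Bob on the rim-start, $n\equiv 0\pmod 4$ case, which is the one genuinely laborious part of the proof and the place I would spend the most care.
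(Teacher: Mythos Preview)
Your casework is set up correctly and matches the paper's: the pole start and the rim start with rim length $\equiv 0 \pmod 4$ both admit even kernels exactly as you describe (the paper handles the latter by invoking Theorem~\ref{thm:deg4}, which amounts to the same color-class kernel you wrote down). You also correctly diagnose that the remaining case --- rim start with rim length $\equiv 2 \pmod 4$ --- has no even kernel: if $S$ contains a pole it can contain no rim vertex and hence not $s$; if $S$ consists only of rim vertices, the parity condition at non-$S$ rim vertices forces $S$ to be the full color class of the rim containing $s$, which then has odd size and fails at the poles. So your instinct that a direct strategy argument is unavoidable here is right.

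The gap is that you never actually give that argument. The paragraph on the hard case cycles through several constructions, each of which you (correctly) reject, and ends with a promise to do a ``direct strategy-stealing/pairing argument'' without supplying one. That is precisely the content of the proposition; without it, nothing is proved in this case. The paper's argument is short and concrete: label the rim $v_0,\dots,v_{n-1}$ with $s=v_0$. If Alice's first move is along the rim, Bob mirrors along the rim, and since Alice can never afford to leave the rim (she would hand Bob a vertex adjacent to $v_0$), Bob eventually returns to $v_0$. If Alice first moves to a pole, say $x$, Bob replies to $v_2$, then (after Alice plays $v_3$) to $v_4$; if Alice ever jumps back to $x$, Bob goes to $v_3$, which is now isolated on the rim, forcing Alice to $y$ and letting Bob finish at $v_0$; otherwise Bob again walks the rim home. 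You should replace the exploratory paragraph with an explicit strategy of this kind.
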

\begin{proof}
Let $G$ be an Eulerian double wheel graph of $n+2$ vertices 
with $n\geq 4$ and $n \equiv 0 \pmod{2}$.
Let $C_n = v_0v_1 \dots v_{n-1}$ be the rim of $G$ 
and let $x,y$ be two vertices of degree $n$.
If $n \equiv 0 \pmod{4}$,
then Bob wins the game by Theorem~\ref{thm:deg4}.
Moreover,
if $x$ or $y$ is the starting vertex of $G$,
then Bob wins the game since $\{x,y\}$ is an even kernel.
Thus we may assume that
$n \equiv 2 \pmod{4}$ and $v_0$ is the starting vertex by symmetry.

If Alice first moves the token from $v_0$ to $v_1$ by symmetry,
then Bob moves it to $v_2$.
After that,
since Alice can move the token to neither $x$ nor $y$,
she has to move it from $v_{2i}$ to $v_{2i+1}$
and then Bob moves it to $v_{2i+2}$,
and hence, Bob can finally move the token back to $v_0$.

So Alice first moves the token from $v_0$ to $x$ by symmetry.
In this case, Bob first moves the token to $v_2$,
and then Alice moves it to $v_3$ and Bob moves it to $v_4$.
After that, Bob can win the game as follows:
If Alice moves the token to $x$,
then Bob moves it to $v_3$.
Since two edges $v_2v_3$ and $v_3v_4$ are already removed,
Alice must move the token to $y$ 
and then Bob can move it back to $v_0$.
Otherwise, similarly to the previous case, 
Bob can finally move the token back to $v_0$.
\end{proof}


\section{Octahedral Path}\label{sec:octa}

This section is devoted to proving Theorem~\ref{thm:octapath}. 
We use the same labeling of the vertices of $E_n$ as drawn in Figure~\ref{fig1}. 
We can see that $E_n$ can be also drawn as Figure~\ref{fig2}. 
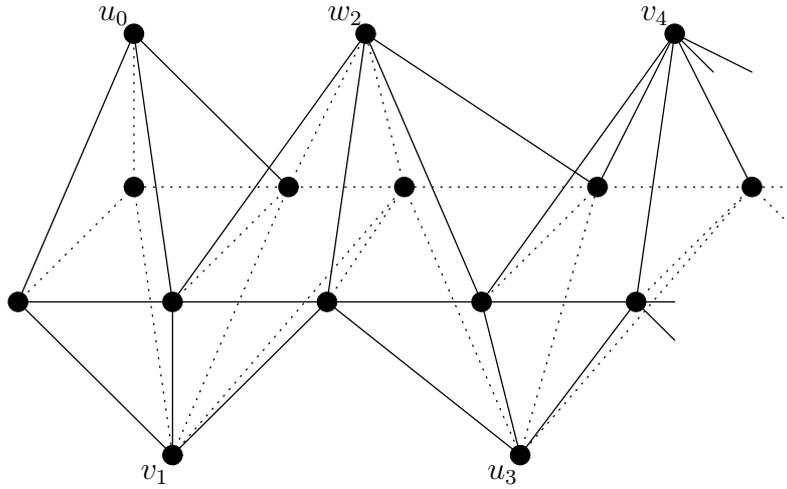
\begin{figure}[htb!]
\centering
\unitlength 0.1in
\begin{picture}( 40.5000, 24.3000)(  5.5000,-28.5000)
%
\special{pn 8}%
\special{sh 1.000}%
\special{ar 1400 2800 50 50  0.0000000 6.2831853}%
\put(10.9500,-5.0500){\makebox(0,0){$u_0$}}%
%
\special{pn 8}%
\special{sh 1.000}%
\special{ar 600 2000 50 50  0.0000000 6.2831853}%
%
\special{pn 8}%
\special{sh 1.000}%
\special{ar 1200 1400 50 50  0.0000000 6.2831853}%
%
\special{pn 8}%
\special{sh 1.000}%
\special{ar 1200 600 50 50  0.0000000 6.2831853}%
%
\special{pn 8}%
\special{sh 1.000}%
\special{ar 1400 2000 50 50  0.0000000 6.2831853}%
%
\special{pn 8}%
\special{sh 1.000}%
\special{ar 2000 1400 50 50  0.0000000 6.2831853}%
%
\special{pn 8}%
\special{sh 1.000}%
\special{ar 2200 2000 50 50  0.0000000 6.2831853}%
%
\special{pn 8}%
\special{sh 1.000}%
\special{ar 3200 2800 50 50  0.0000000 6.2831853}%
%
\special{pn 8}%
\special{sh 1.000}%
\special{ar 3000 2000 50 50  0.0000000 6.2831853}%
%
\special{pn 8}%
\special{sh 1.000}%
\special{ar 2600 1400 50 50  0.0000000 6.2831853}%
%
\special{pn 8}%
\special{sh 1.000}%
\special{ar 2400 600 50 50  0.0000000 6.2831853}%
%
\special{pn 8}%
\special{sh 1.000}%
\special{ar 3600 1400 50 50  0.0000000 6.2831853}%
%
\special{pn 8}%
\special{sh 1.000}%
\special{ar 4000 600 50 50  0.0000000 6.2831853}%
%
\special{pn 8}%
\special{sh 1.000}%
\special{ar 3800 2000 50 50  0.0000000 6.2831853}%
%
\special{pn 8}%
\special{sh 1.000}%
\special{ar 4400 1400 50 50  0.0000000 6.2831853}%
%
\special{pn 8}%
\special{pa 600 2000}%
\special{pa 4000 2000}%
\special{fp}%
%
\special{pn 8}%
\special{pa 1200 600}%
\special{pa 600 2000}%
\special{fp}%
%
\special{pn 8}%
\special{pa 1400 2000}%
\special{pa 1200 600}%
\special{fp}%
%
\special{pn 8}%
\special{pa 2000 1400}%
\special{pa 1200 600}%
\special{fp}%
%
\special{pn 8}%
\special{pa 2400 600}%
\special{pa 1400 2000}%
\special{fp}%
%
\special{pn 8}%
\special{pa 2200 2000}%
\special{pa 2400 600}%
\special{fp}%
%
\special{pn 8}%
\special{pa 2400 600}%
\special{pa 3600 1400}%
\special{fp}%
%
\special{pn 8}%
\special{pa 4000 600}%
\special{pa 3600 1400}%
\special{fp}%
%
\special{pn 8}%
\special{pa 3000 2000}%
\special{pa 2400 600}%
\special{fp}%
%
\special{pn 8}%
\special{pa 4000 600}%
\special{pa 3000 2000}%
\special{fp}%
%
\special{pn 8}%
\special{pa 3800 2000}%
\special{pa 4000 600}%
\special{fp}%
%
\special{pn 8}%
\special{pa 4000 600}%
\special{pa 4400 1400}%
\special{fp}%
%
\special{pn 8}%
\special{pa 4600 1400}%
\special{pa 1200 1400}%
\special{dt 0.045}%
%
\special{pn 8}%
\special{pa 1200 1400}%
\special{pa 600 2000}%
\special{dt 0.045}%
%
\special{pn 8}%
\special{pa 1200 1400}%
\special{pa 1200 600}%
\special{dt 0.045}%
%
\special{pn 8}%
\special{pa 2000 1400}%
\special{pa 2400 600}%
\special{dt 0.045}%
%
\special{pn 8}%
\special{pa 1400 2000}%
\special{pa 2000 1400}%
\special{dt 0.045}%
%
\special{pn 8}%
\special{pa 3800 2000}%
\special{pa 4400 1400}%
\special{dt 0.045}%
%
\special{pn 8}%
\special{pa 3600 1400}%
\special{pa 3000 2000}%
\special{dt 0.045}%
\special{pa 3000 2000}%
\special{pa 3000 2000}%
\special{dt 0.045}%
%
\special{pn 8}%
\special{pa 1400 2800}%
\special{pa 600 2000}%
\special{fp}%
%
\special{pn 8}%
\special{pa 1400 2000}%
\special{pa 1400 2800}%
\special{fp}%
%
\special{pn 8}%
\special{pa 1400 2800}%
\special{pa 2200 2000}%
\special{fp}%
%
\special{pn 8}%
\special{pa 2200 2000}%
\special{pa 3200 2800}%
\special{fp}%
%
\special{pn 8}%
\special{pa 3200 2800}%
\special{pa 3000 2000}%
\special{fp}%
%
\special{pn 8}%
\special{pa 3200 2800}%
\special{pa 3800 2000}%
\special{fp}%
%
\special{pn 8}%
\special{pa 3800 2000}%
\special{pa 4000 2200}%
\special{fp}%
%
\special{pn 8}%
\special{pa 3200 2800}%
\special{pa 3600 1400}%
\special{dt 0.045}%
%
\special{pn 8}%
\special{pa 4400 1400}%
\special{pa 3200 2800}%
\special{dt 0.045}%
%
\special{pn 8}%
\special{pa 1400 2800}%
\special{pa 2000 1400}%
\special{dt 0.045}%
%
\special{pn 8}%
\special{pa 1200 1400}%
\special{pa 1400 2800}%
\special{dt 0.045}%
%
\special{pn 8}%
\special{pa 2600 1400}%
\special{pa 2400 600}%
\special{dt 0.045}%
%
\special{pn 8}%
\special{pa 2600 1400}%
\special{pa 2200 2000}%
\special{dt 0.045}%
%
\special{pn 8}%
\special{pa 2600 1400}%
\special{pa 3200 2800}%
\special{dt 0.045}%
%
\special{pn 8}%
\special{pa 1400 2800}%
\special{pa 2600 1400}%
\special{dt 0.045}%
\put(13.1000,-29.0000){\makebox(0,0){$v_1$}}%
\put(31.1000,-29.0000){\makebox(0,0){$u_3$}}%
\put(22.9500,-5.0500){\makebox(0,0){$w_2$}}%
\put(38.9500,-5.0500){\makebox(0,0){$v_4$}}%
%
\special{pn 8}%
\special{pa 4000 600}%
\special{pa 4200 800}%
\special{fp}%
%
\special{pn 8}%
\special{pa 4000 600}%
\special{pa 4400 800}%
\special{fp}%
%
\special{pn 8}%
\special{pa 4400 1400}%
\special{pa 4600 1600}%
\special{dt 0.045}%
\end{picture}%
\caption{Another drawing of the octahedral path}\label{fig2}
\end{figure}


\begin{lem}\label{lem1}
Alice (resp., Bob) can win the game on $E(n,p)$ if and only if Alice (resp., Bob) can win the game on $E(n,n-p)$. 
\end{lem}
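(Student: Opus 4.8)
The plan is to exhibit a graph automorphism of $E_n$ that reverses the octahedral path, sending every vertex with index $p$ to the corresponding vertex with index $n-p$, and then to use the obvious fact that the feedback game is invariant under graph isomorphisms taking the starting vertex to the starting vertex. For the construction I would first fix notation: put $T_i=\{u_i,v_i,w_i\}$ for $0\le i\le n$, and for $1\le i\le n$ let $O_i$ be the $i$-th octahedron in the construction, so that $V(O_i)=T_{i-1}\cup T_i$. Then $O_i$ and $O_{i+1}$ meet exactly in the triangle $T_i$, octahedra at distance $\ge 2$ in the chain are vertex-disjoint, and $E(E_n)=\bigcup_{i=1}^{n}E(O_i)$. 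Since $O_i$ ($i\ge 2$) arises by the octahedron addition applied to the face $T_{i-1}$, while $O_1=E_1$ is the base octahedron with $T_0,T_1$ a pair of opposite faces, the three pairs of non-adjacent vertices inside $O_i\cong K_{2,2,2}$ are $\{u_{i-1},u_i\}$, $\{v_{i-1},v_i\}$, and $\{w_{i-1},w_i\}$, consistently with the labelling in Figure~\ref{fig1}.

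Next I would define $\varphi\colon V(E_n)\to V(E_n)$ by $\varphi(u_i)=u_{n-i}$, $\varphi(v_i)=v_{n-i}$, and $\varphi(w_i)=w_{n-i}$ for all $i$, and check that $\varphi$ is an automorphism of $E_n$. Because every edge of $E_n$ lies in some $O_i$, it suffices to verify that $\varphi$ restricts to an isomorphism $O_i\to O_{n+1-i}$ for each $i$. This is immediate from the description above: $\varphi$ carries the non-adjacent pair $\{u_{i-1},u_i\}$ of $O_i$ to $\{u_{n-i+1},u_{n-i}\}$, which is exactly the corresponding non-adjacent pair of $O_{n+1-i}$, and likewise for the pairs labelled by $v$ and $w$; and any bijection between two copies of $K_{2,2,2}$ that matches up the three pairs of non-adjacent vertices is an isomorphism. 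In geometric terms this just records that reversing the order of $O_1,\dots,O_n$ is a symmetry of $E_n$, as one sees from Figure~\ref{fig1}. I expect this step — checking that the octahedral path is a palindromic chain of octahedra glued along triangles — to be the only point that needs any care, and it amounts to noting that the gluing pattern $T_0,O_1,T_1,\dots,O_n,T_n$ reads the same in both directions.

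Finally, $\varphi$ is an automorphism of $E_n$ with $\varphi(u_p)=u_{n-p}$, so it is an isomorphism from the game $E(n,p)$ onto the feedback game on $E_n$ with starting vertex $u_{n-p}$, which is precisely $E(n,n-p)$ (recall that the choice among $u_q,v_q,w_q$ as a starting vertex is immaterial). An isomorphism of games preserves, at every reachable position, both the set of available moves and which player is to move, so it maps a winning strategy for Alice (resp.\ Bob) in $E(n,p)$ to a winning strategy for Alice (resp.\ Bob) in $E(n,n-p)$; applying the same statement with $p$ replaced by $n-p$ and using $\varphi^{-1}=\varphi$ gives the converse, which completes the proof.
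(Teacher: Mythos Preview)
Your overall strategy --- exhibit a graph automorphism of $E_n$ reversing the chain and then transport winning strategies along it --- is exactly the symmetry argument the paper has in mind. However, the specific map $\varphi$ you wrote down is \emph{not} an automorphism of $E_n$, so the proof as written fails.

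Concretely, using the neighbour lists \eqref{neighbor} one has $w_1\in N(u_0)$, so $u_0w_1\in E(E_n)$; but $N(u_n)=\{v_n,w_n,u_{n-1},v_{n-1}\}$, so $\varphi(u_0)\varphi(w_1)=u_nw_{n-1}\notin E(E_n)$. The source of the error is your identification of the non-adjacent pairs in $O_i$: from \eqref{neighbor} the antipodal pairs of $O_i\cong K_{2,2,2}$ are $\{u_i,w_{i-1}\}$, $\{v_i,u_{i-1}\}$, $\{w_i,v_{i-1}\}$, not $\{u_{i-1},u_i\}$, $\{v_{i-1},v_i\}$, $\{w_{i-1},w_i\}$. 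In other words, the labelling carries a cyclic twist $u\to v\to w\to u$ as the index increases by one, and reversing the indices alone does not respect this twist.

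The fix is minor: compose your index reversal with the transposition $v\leftrightarrow w$. Define $\psi(u_i)=u_{n-i}$, $\psi(v_i)=w_{n-i}$, $\psi(w_i)=v_{n-i}$. A direct check against \eqref{neighbor} shows $\psi(N(x))=N(\psi(x))$ for every vertex $x$ (the three cases $x\in\{u_i,v_i,w_i\}$ are symmetric once one unwinds the twist), so $\psi$ is a genuine automorphism of $E_n$. Since $\psi$ carries $\{u_p,v_p,w_p\}$ onto $\{u_{n-p},v_{n-p},w_{n-p}\}$ and the choice among these three as starting vertex is immaterial, your final paragraph then goes through verbatim with $\psi$ in place of $\varphi$.
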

\begin{proof}
It is trivial by the symmetry of the vertices $u_p,v_p,w_p$ and $u_{n-p},v_{n-p},w_{n-p}$ on $E_n$, respectively. 
\end{proof}

For the proof of the lemmas below, we describe the neighbors of each vertex: 
\begin{equation}\label{neighbor}
\begin{split}
&N(u_i)=\begin{cases}
\{v_i,w_i,u_{i+1},w_{i+1}\} &\text{ if }i=0, \\
\{v_i,w_i,u_{i-1},v_{i-1}\} &\text{ if }i=n, \\
\{v_i,w_i,u_{i+1},w_{i+1},u_{i-1},v_{i-1}\} &\text{ otherwise}, 
\end{cases}\\
&N(v_i)=\begin{cases}
\{u_i,w_i,v_{i+1},u_{i+1}\} &\text{ if }i=0, \\
\{u_i,w_i,v_{i-1},w_{i-1}\} &\text{ if }i=n, \\
\{u_i,w_i,v_{i+1},u_{i+1},v_{i-1},w_{i-1}\} &\text{ otherwise}, 
\end{cases}\\
&N(w_i)=\begin{cases}
\{u_i,v_i,w_{i+1},v_{i+1}\} &\text{ if }i=0, \\
\{u_i,v_i,w_{i-1},u_{i-1}\} &\text{ if }i=n, \\
\{u_i,v_i,w_{i+1},v_{i+1},w_{i-1},u_{i-1}\} &\text{ otherwise}. 
\end{cases}
\end{split}
\end{equation}

\begin{lem}\label{lem2}
Bob can win the game on $E(3m+1,3k)$ and $E(3m+1,3k+1)$ 
for any $m,k \in \ZZ_{\geq 0}$. 
\end{lem}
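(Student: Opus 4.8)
The plan is to exhibit, for $n=3m+1$, an explicit \emph{even kernel} of $E_n$ and then invoke Remark~\ref{rem:even-kernel}. A pleasant feature is that a single set $S$ will serve both starting positions $p\equiv 0$ and $p\equiv 1\pmod 3$, because (as noted after the definition of $E(n,p)$) the choice among $u_p,v_p,w_p$ is immaterial; so I will be free to take whichever of the three is in $S$.

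Concretely, I would set
\[
 S=\{\,u_i : 0\le i\le n,\ i\equiv 0\pmod 3\,\}\ \cup\ \{\,v_i : 0\le i\le n,\ i\equiv 1\pmod 3\,\},
\]
i.e.\ one vertex from each triangle $u_iv_iw_i$ with $i\not\equiv 2\pmod 3$ and nothing from the triangles with $i\equiv 2\pmod 3$. This is a proper subset of one partite set of $E_n$ (the colour class through $u_0$, with every third layer deleted), so independence of $S$ is immediate; it can also be read directly from \eqref{neighbor}, since $u_j\in N(u_i)$ and $v_j\in N(v_i)$ only for $j=i\pm1$, and $v_j\in N(u_i)$ only for $j\in\{i-1,i\}$.

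The real work is checking the even-kernel condition: every vertex outside $S$ has an even number of neighbours in $S$. The vertices outside $S$ are $v_i,w_i$ for $i\equiv0$, $\ u_i,w_i$ for $i\equiv1$, and $u_i,v_i,w_i$ for $i\equiv2\pmod 3$. Running through these with \eqref{neighbor}, for an interior layer the set of neighbours in $S$ is always one of $\{u_i,v_{i+1}\}$, $\{v_i,u_{i-1}\}$, $\{u_{i+1},v_{i-1}\}$, or $\emptyset$ — always of even size — depending only on $i\bmod 3$ and on which of $u,v,w$ the vertex is; for the two boundary layers it is $\{u_0,v_1\}$ at layer $0$ and $\{u_{n-1},v_n\}$ at layer $n$. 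The hypothesis $n\equiv 1\pmod 3$ enters exactly here: layer $n$ then carries $v_n\in S$ and layer $n-1\equiv0$ carries the adjacent vertex $u_{n-1}\in S$, so the count closes up to $2$ (for $n\equiv0$ or $2$ this last step fails, consistent with Alice winning there). The small cases $m=0$ (the octahedron, $S=\{u_0,v_1\}$) and $m=1$ ($S=\{u_0,v_1,u_3,v_4\}$) are covered by, or are checked instantly alongside, this analysis.

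Finally, since $u_{3k}\in S$ and $v_{3k+1}\in S$, I take the starting vertex of $E(3m+1,3k)$ to be $u_{3k}$ and that of $E(3m+1,3k+1)$ to be $v_{3k+1}$; in both cases $S$ is an even kernel for the game, so by Remark~\ref{rem:even-kernel} Bob wins. The one genuinely laborious step is the case analysis of the even-kernel condition, and the only subtle point in it is the behaviour at the last layer — which is precisely what forces $n\equiv 1\pmod 3$. Note there is no dependence on $k$, so the single kernel $S$ settles all admissible $k$ at once.
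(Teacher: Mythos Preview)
Your proof is correct and essentially identical to the paper's: both exhibit the same even kernel $S=\{u_{3i},v_{3i+1}:i=0,\dots,m\}$ and verify the even-neighbour condition case by case via \eqref{neighbor}, then invoke Remark~\ref{rem:even-kernel}. Your write-up is slightly more explicit about the boundary layers and about why $n\equiv1\pmod 3$ is needed, but the argument is the same.
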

\begin{proof}
Let $$S=\{u_{3i},v_{3i+1} : i=0,1,\ldots,m\}.$$
For proving the statement, it is enough to show that $S$ is an even kernel. 
By \eqref{neighbor}, we can easily see the following: 
\begin{align*}
&N(v_{3i}) \cap S =N(w_{3i}) \cap S =\{u_{3i},v_{3i+1}\} \text{ for }i=0,1,\ldots,m; \\
&N(u_{3i+1}) \cap S =N(w_{3i+1}) \cap S=\{u_{3i},v_{3i+1}\} \text{ for }i=0,1,\ldots,m; \\
&N(u_{3i+2}) \cap S=N(v_{3i+2}) \cap S=\{v_{3i+1},u_{3i+3}\} \text{ for }i=0,1,\ldots,m-1; \\
&N(w_{3i+2}) \cap S=\emptyset. 
\end{align*}
\end{proof}

\begin{lem}\label{lem3}
Bob can win the game on $E(3m+2,3k+1)$ for any 
$m,k \in \ZZ_{\geq 0}$. 
\end{lem}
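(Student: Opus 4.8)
The plan is to follow the same strategy that worked for Lemma \ref{lem2}: exhibit an explicit even kernel $S$ of $E_n$ for $n=3m+2$ containing the prescribed starting vertex (here $u_{3k+1}$, or equivalently $v_{3k+1}$ or $w_{3k+1}$ by the "choice does not matter" remark), and invoke Remark \ref{rem:even-kernel} to conclude that Bob wins. By Lemma \ref{lem1}, a starting vertex with index $3k+1$ is mirror-symmetric to one with index $n-(3k+1) = 3m+1-3k \equiv 1 \pmod 3$ as well, so the single congruence class $p\equiv 1$ is really what we must cover; it suffices to build one even kernel whose $S$-indices realize every residue-$1$ slice. The natural guess, continuing the pattern $u_{3i},v_{3i+1}$ from Lemma \ref{lem2} but now accounting for the extra layer at the $n$-end (since $n=3m+2$ rather than $3m+1$), is something like
\[
S=\{\,u_{3i}, v_{3i+1} : i=0,1,\ldots,m\,\}\cup\{\,u_{3m+2}\,\}
\quad\text{or}\quad
S=\{\,u_{3i}, v_{3i+1}: i=0,\ldots,m\,\}\cup\{\,w_{n}\,\},
\]
and I would first pin down which terminal vertex (among $u_n,v_n,w_n$) must be added so that $u_{3k+1}\in S$ after possibly relabeling the three colors $u,v,w$; note the labeling has a $3$-fold color symmetry, so I am free to permute $(u,v,w)$ cyclically to place the starting vertex into whichever slot is convenient, and the index $3k+1$ then lands on a $v$-type vertex in $S$.

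The key steps, in order: (1) Write down $S$ explicitly and check $s\in S$ using the color symmetry. (2) Check independence of $S$: the only candidate adjacencies are between consecutive layers $i,i+1$ and within a layer; from \eqref{neighbor}, $u_{3i}$ and $v_{3i+1}$ are non-adjacent (they lie in different layers and $v_{3i+1}\notin N(u_{3i})$), $v_{3i+1}$ and $u_{3i+3}$ are non-adjacent (two layers apart), and the terminal vertex added at layer $n=3m+2$ is only adjacent to layer $n-1=3m+1$, which contributes no $S$-vertex in that layer except possibly $u_{n-1}$ — this is exactly the spot to verify carefully. (3) Check the parity condition: for every vertex $x\notin S$, count $|N(x)\cap S|$ and show it is even. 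Using \eqref{neighbor} this reduces to a finite case analysis over the layer index mod $3$ and the color of $x$, much as in Lemma \ref{lem2}, where the three interior cases each gave exactly two $S$-neighbors and $w_{3i+2}$ gave zero; here I expect the bulk of the graph to behave identically, with the only new cases being the vertices in layers $3m$, $3m+1$, and $3m+2$ near the terminal octahedron, where the degree-$4$ boundary vertices and the added terminal vertex change the local count.

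The main obstacle I anticipate is the boundary bookkeeping at the $n$-end: because $n=3m+2$, the residue pattern of the layers "wraps" differently than in the $3m+1$ case, and naively reusing $\{u_{3i},v_{3i+1}\}$ leaves layer $3m+2$ under-covered (so some vertex there may see an odd number of $S$-neighbors) unless exactly the right terminal vertex is thrown in — and throwing it in may in turn spoil the parity of a vertex in layer $3m+1$. Resolving this tension — finding the one choice of terminal vertex (and possibly a local adjustment of which of $u_{3m},v_{3m+1}$ stays in $S$) that simultaneously keeps $S$ independent, keeps all parities even, and keeps $u_{3k+1}$ inside $S$ — is the crux; once the correct $S$ is identified, the verification is the same routine substitution into \eqref{neighbor} as in Lemma \ref{lem2}. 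If no single even kernel works for all $k$ simultaneously, the fallback is to give a $k$-dependent even kernel (shifting the "phase" of the $u_{3i},v_{3i+1}$ pattern so that index $3k+1$ always lands on a $v$-slot), which by Lemma \ref{lem1} only needs to be done up to the mirror symmetry.
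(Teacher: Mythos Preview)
Your plan has a genuine gap: in general there is \emph{no} even kernel of $E_{3m+2}$ containing the starting vertex $v_{3k+1}$, so the approach of Lemma~\ref{lem2} cannot be pushed through, not even with your $k$-dependent fallback. The smallest instance already shows this. Take $m=k=0$, so $n=2$ and $s=v_1$. From \eqref{neighbor} we have $N(v_1)=\{u_1,w_1,v_2,u_2,v_0,w_0\}$, hence any independent set containing $v_1$ is a subset of $\{v_1,u_0,w_2\}$. But $N(u_1)\cap\{v_1,u_0,w_2\}=\{v_1,u_0,w_2\}$ has three elements, $N(u_2)\cap\{v_1,u_0\}=\{v_1\}$ has one, and $N(v_0)\cap\{v_1,w_2\}=\{v_1\}$ has one; so every nonempty subset fails the parity condition. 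By the cyclic $(u,v,w)$-symmetry the same obstruction holds for $u_1$ and $w_1$. Thus no even kernel exists for $E(2,1)$, and your proposed proof cannot start.

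What the paper does instead is exactly what your boundary analysis was running into: it takes $S=S_1\cup S_2$ with
\[
S_1=\{u_{3i},v_{3i+1}:0\le i\le k\},\qquad S_2=\{v_{3k+1+3j},w_{3k+2+3j}:0\le j\le m-k\},
\]
which is independent but is \emph{not} an even kernel, because $u_{3k+1}$ and $w_{3k+1}$ each see three $S$-vertices ($u_{3k},v_{3k+1},w_{3k+2}$). The missing idea is that Bob plays an \emph{adaptive} strategy at these two bad vertices: from $s=v_{3k+1}$, whichever neighbor Alice chooses, Bob commits to one side, moving to $u_{3k}\in S_1$ or to $w_{3k+2}\in S_2$, and thereafter follows the even-kernel graph $H_1$ (on the $S_1$ side) or $H_2$ (on the $S_2$ side) exclusively. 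Each $H_i$ really is an even-kernel graph on its side of the junction, so the usual parity argument finishes the game once Bob has chosen a side. The crux you should add is this two-sided ``almost'' even kernel together with Bob's first-move branching; the routine parity checks for $H_1$ and $H_2$ then proceed just as you outlined for the bulk of the graph.
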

\begin{proof}
Let us consider $S=S_1 \cup S_2$, where 
$$S_1=\{u_{3i},v_{3i+1} : i=0,1,\ldots,k\} \text{ and } S_2=\{v_{3k+1+3j},w_{3k+2+3j} : j=0,1,\ldots,m-k\}.$$ 
Consider the bipartite graph $H$ as depicted in Figure~\ref{fig3}: 
\begin{figure}[htb!]
\centering
\unitlength 0.1in
\begin{picture}( 64.5000, 27.5000)( 13.5000,-41.5500)
%
\special{pn 8}%
\special{sh 1.000}%
\special{ar 4600 3800 50 50  0.0000000 6.2831853}%
\put(43.0000,-14.9000){\makebox(0,0){$v_{3k+1}$}}%
%
\special{pn 8}%
\special{sh 1.000}%
\special{ar 4400 1600 50 50  0.0000000 6.2831853}%
%
\special{pn 8}%
\special{sh 1.000}%
\special{ar 6400 3800 50 50  0.0000000 6.2831853}%
%
\special{pn 8}%
\special{sh 0}%
\special{ar 5800 1600 50 50  0.0000000 6.2831853}%
%
\special{pn 8}%
\special{sh 1.000}%
\special{ar 7200 1600 50 50  0.0000000 6.2831853}%
%
\special{pn 8}%
\special{pa 4600 3000}%
\special{pa 4400 1600}%
\special{fp}%
%
\special{pn 8}%
\special{pa 5200 2400}%
\special{pa 4400 1600}%
\special{fp}%
%
\special{pn 8}%
\special{pa 7200 1600}%
\special{pa 6800 2400}%
\special{fp}%
%
\special{pn 8}%
\special{pa 7200 1600}%
\special{pa 6200 3000}%
\special{fp}%
%
\special{pn 8}%
\special{pa 7000 3000}%
\special{pa 7200 1600}%
\special{fp}%
%
\special{pn 8}%
\special{pa 7200 1600}%
\special{pa 7600 2400}%
\special{fp}%
%
\special{pn 8}%
\special{pa 4400 2400}%
\special{pa 4400 1600}%
\special{fp}%
%
\special{pn 8}%
\special{pa 4600 3000}%
\special{pa 4600 3800}%
\special{fp}%
%
\special{pn 8}%
\special{pa 4600 3800}%
\special{pa 5400 3000}%
\special{fp}%
%
\special{pn 8}%
\special{pa 5400 3000}%
\special{pa 6400 3800}%
\special{fp}%
%
\special{pn 8}%
\special{pa 6400 3800}%
\special{pa 6200 3000}%
\special{fp}%
%
\special{pn 8}%
\special{pa 6400 3800}%
\special{pa 7000 3000}%
\special{fp}%
%
\special{pn 8}%
\special{pa 6400 3800}%
\special{pa 6800 2400}%
\special{fp}%
%
\special{pn 8}%
\special{pa 7600 2400}%
\special{pa 6400 3800}%
\special{fp}%
%
\special{pn 8}%
\special{pa 4600 3800}%
\special{pa 5200 2400}%
\special{fp}%
%
\special{pn 8}%
\special{pa 4400 2400}%
\special{pa 4600 3800}%
\special{fp}%
%
\special{pn 8}%
\special{pa 5800 2400}%
\special{pa 6400 3800}%
\special{fp}%
%
\special{pn 8}%
\special{pa 4600 3800}%
\special{pa 5800 2400}%
\special{fp}%
\put(45.1000,-39.0000){\makebox(0,0){$w_{3k+2}$}}%
\put(43.0000,-23.1000){\makebox(0,0){$y$}}%
\put(53.0000,-23.1000){\makebox(0,0){$z$}}%
%
\special{pn 8}%
\special{pa 7200 1600}%
\special{pa 7400 1800}%
\special{fp}%
%
\special{pn 8}%
\special{pa 7200 1600}%
\special{pa 7600 1800}%
\special{fp}%
%
\special{pn 8}%
\special{pa 3600 2400}%
\special{pa 4400 1600}%
\special{fp}%
%
\special{pn 8}%
\special{pa 4400 1600}%
\special{pa 3200 3000}%
\special{fp}%
%
\special{pn 8}%
\special{pa 3200 3000}%
\special{pa 3000 3800}%
\special{fp}%
%
\special{pn 8}%
\special{pa 3000 3800}%
\special{pa 3600 2400}%
\special{fp}%
%
\special{pn 8}%
\special{pa 3000 3800}%
\special{pa 2600 3000}%
\special{fp}%
%
\special{pn 8}%
\special{pa 3000 3800}%
\special{pa 2800 2400}%
\special{fp}%
%
\special{pn 8}%
\special{pa 2800 2400}%
\special{pa 2000 3800}%
\special{fp}%
%
\special{pn 8}%
\special{pa 2000 3800}%
\special{pa 2600 3000}%
\special{fp}%
%
\special{pn 8}%
\special{pa 2000 1600}%
\special{pa 1600 1800}%
\special{fp}%
%
\special{pn 8}%
\special{pa 1800 1800}%
\special{pa 2000 1600}%
\special{fp}%
%
\special{pn 8}%
\special{sh 1.000}%
\special{ar 2000 1600 50 50  0.0000000 6.2831853}%
%
\special{pn 8}%
\special{pa 1600 2400}%
\special{pa 2000 1600}%
\special{fp}%
%
\special{pn 8}%
\special{pa 2000 3800}%
\special{pa 1600 2400}%
\special{fp}%
%
\special{pn 8}%
\special{pa 1400 3000}%
\special{pa 2000 3800}%
\special{fp}%
%
\special{pn 8}%
\special{pa 1400 3000}%
\special{pa 2000 1600}%
\special{fp}%
%
\special{pn 8}%
\special{sh 0}%
\special{ar 5200 2400 50 50  0.0000000 6.2831853}%
%
\special{pn 8}%
\special{sh 0}%
\special{ar 4600 3000 50 50  0.0000000 6.2831853}%
%
\special{pn 8}%
\special{sh 0}%
\special{ar 3200 3000 50 50  0.0000000 6.2831853}%
%
\special{pn 8}%
\special{sh 0}%
\special{ar 2600 3000 50 50  0.0000000 6.2831853}%
%
\special{pn 8}%
\special{sh 0}%
\special{ar 1400 3000 50 50  0.0000000 6.2831853}%
%
\special{pn 8}%
\special{sh 0}%
\special{ar 1600 2400 50 50  0.0000000 6.2831853}%
%
\special{pn 8}%
\special{sh 0}%
\special{ar 2800 2400 50 50  0.0000000 6.2831853}%
%
\special{pn 8}%
\special{sh 0}%
\special{ar 3600 2400 50 50  0.0000000 6.2831853}%
%
\special{pn 8}%
\special{sh 0}%
\special{ar 5800 2400 50 50  0.0000000 6.2831853}%
%
\special{pn 8}%
\special{sh 0}%
\special{ar 6800 2400 50 50  0.0000000 6.2831853}%
%
\special{pn 8}%
\special{sh 0}%
\special{ar 7600 2400 50 50  0.0000000 6.2831853}%
%
\special{pn 8}%
\special{sh 0}%
\special{ar 7000 3000 50 50  0.0000000 6.2831853}%
%
\special{pn 8}%
\special{sh 0}%
\special{ar 6200 3000 50 50  0.0000000 6.2831853}%
%
\special{pn 8}%
\special{sh 0}%
\special{ar 5400 3000 50 50  0.0000000 6.2831853}%
\put(35.0000,-23.1000){\makebox(0,0){$x$}}%
%
\special{pn 8}%
\special{sh 1.000}%
\special{ar 3000 3800 50 50  0.0000000 6.2831853}%
%
\special{pn 8}%
\special{sh 1.000}%
\special{ar 2000 3800 50 50  0.0000000 6.2831853}%
%
\special{pn 8}%
\special{pa 2000 3000}%
\special{pa 2000 3800}%
\special{fp}%
%
\special{pn 8}%
\special{pa 2000 3000}%
\special{pa 2000 1600}%
\special{fp}%
%
\special{pn 8}%
\special{pa 2000 1600}%
\special{pa 2400 2400}%
\special{fp}%
%
\special{pn 8}%
\special{pa 2400 2400}%
\special{pa 2000 3800}%
\special{fp}%
%
\special{pn 8}%
\special{sh 0}%
\special{ar 2400 2400 50 50  0.0000000 6.2831853}%
%
\special{pn 8}%
\special{sh 0}%
\special{ar 2000 3000 50 50  0.0000000 6.2831853}%
\put(29.1000,-39.0000){\makebox(0,0){$u_{3k}$}}%
%
\special{pn 8}%
\special{sh 0}%
\special{ar 3000 1600 50 50  0.0000000 6.2831853}%
%
\special{pn 8}%
\special{pa 3970 3300}%
\special{pa 3600 4100}%
\special{fp}%
%
\special{pn 8}%
\special{pa 4200 4100}%
\special{pa 7400 4100}%
\special{fp}%
%
\special{pn 8}%
\special{pa 4200 4100}%
\special{pa 4030 3300}%
\special{fp}%
\put(26.0000,-42.4000){\makebox(0,0){$H_1$}}%
\put(56.0000,-42.4000){\makebox(0,0){$H_2$}}%
%
\special{pn 8}%
\special{pa 3600 4100}%
\special{pa 1800 4100}%
\special{fp}%
%
\special{pn 8}%
\special{pa 1800 4100}%
\special{pa 1600 4100}%
\special{dt 0.045}%
%
\special{pn 8}%
\special{pa 7400 4100}%
\special{pa 7800 4100}%
\special{dt 0.045}%
%
\special{pn 8}%
\special{pa 3000 3800}%
\special{pa 4400 2400}%
\special{fp}%
%
\special{pn 8}%
\special{pa 4000 3000}%
\special{pa 4600 3800}%
\special{fp}%
%
\special{pn 8}%
\special{pa 4000 3000}%
\special{pa 4400 1600}%
\special{fp}%
%
\special{pn 8}%
\special{pa 4000 3000}%
\special{pa 3000 3800}%
\special{fp}%
%
\special{pn 8}%
\special{sh 0}%
\special{ar 4400 2400 50 50  0.0000000 6.2831853}%
%
\special{pn 8}%
\special{sh 0}%
\special{ar 4000 3000 50 50  0.0000000 6.2831853}%
\end{picture}%
\caption{``Almost'' even kernel graph $H$}\label{fig3}
\end{figure}
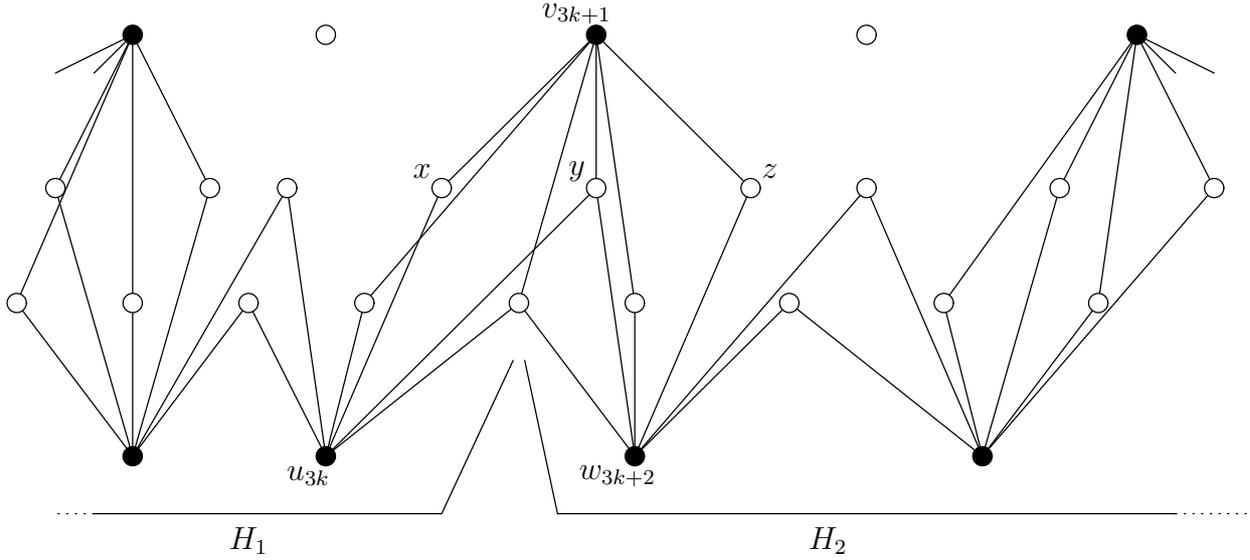

Note that there are two even kernel graphs $H_1$ and $H_2$
with respect to $S_1$ and $S_2$, respectively, 
although $S$ is not an even kernel since $u_{3k+1}$ and $w_{3k+1}$ are adjacent to $u_{3k},v_{3k+1}$ and $w_{3k+2}$. 

If Alice moves the token to the vertex $x$ or $y$ (resp., $z$ or $y$), then Bob may move the token to $u_{3k}$ (resp., $w_{3k+2}$). 
After that, by following the even kernel graph $H_1$ (resp., $H_2$), we see that Bob can move the token to the vertex $u_{3k}$ (resp., $w_{3k+2}$). 
Hence, Bob can finally win the game. 
\end{proof}

\begin{lem}\label{lem4}
Alice can win the game on $E(n,3k+2)$ for any $n,k \in \ZZ_{\geq 0}$. 
\end{lem}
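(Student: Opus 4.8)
Since the choice of $u_p,v_p,w_p$ as starting vertex is immaterial (cf.\ the remark before Theorem~\ref{thm:octapath}), assume the starting vertex is $s=u_{3k+2}$. The plan is to let Alice open with a single well‑chosen move and then follow an even‑kernel defence in the resulting graph; the key point is that $E_n$ itself has no even kernel containing a vertex of a level congruent to $2\bmod 3$ — in the periodic kernel of Lemma~\ref{lem2} such a level is exactly the ``gap'' level — and deleting one edge at $s$ is precisely what is needed to repair this. Concretely, I would have Alice move the token to $t=v_{3k+1}$, deleting $st$, and then work in $G':=E_n-st$.

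In $G'$ I would exhibit a set
$$S'=\{u_{3i}:0\le i\le k\}\cup\{v_{3i+1}:0\le i\le k\}\cup\{u_{3k+2}\}\cup S'_{\mathrm{right}},$$
where the first two parts form the periodic kernel of Lemma~\ref{lem2} on the levels $0,\dots,3k+1$ (which always closes off cleanly since $3k+2\equiv2\pmod3$), and $S'_{\mathrm{right}}$ is a similar periodic kernel supported on the levels $3k+3,\dots,n$: for $n\equiv2\pmod3$ one starts $S'_{\mathrm{right}}$ with a gap at level $3k+3$, while for $n\equiv1\pmod3$ one starts it with $v_{3k+3}$, and in either case the pattern then closes off cleanly at level $n$ (and for $3k+2=n$ there is simply no right part at all). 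Using \eqref{neighbor} one checks that $S'$ is independent in $G'$ — here the deletion of $st$ is essential, because $u_{3k+2}$ is adjacent in $E_n$ to $v_{3k+1}\in S'$ — and that every vertex outside $S'$ has an even number of neighbours in $S'$ \emph{with the sole exception of a short list of vertices near level $3k+2$ (the ``defects''), each of which is a neighbour of $s$ in $G'$.} Alice's strategy is then the even‑kernel defence of Remark~\ref{rem:even-kernel} with one amendment: whenever the token is placed on a defect it is Alice's turn there, so she moves it directly to $s$ and wins; at every other moment the parity argument lets her return the token to $S'$, while Bob, always moving out of the independent set $S'$, can never reach $s$ nor create an isolated vertex. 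Since the game is finite, Alice must win.

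The main obstacle is the case $n\equiv0\pmod3$: then the ``right'' block of levels has length $\equiv1\pmod3$, so no periodic kernel of the above type can both attach cleanly to level $3k+2$ and close cleanly at level $n$, and a genuine defect appears at the far end of the path — a defect which, being far from $s$, is not covered by the amendment above. I expect this to be handled either by the branching device already used in Lemma~\ref{lem3} (maintain two candidate even kernels of $G'$ and commit to one according to Bob's reply, so that the far‑end irregularity is always absorbed on Bob's time), or by a direct local analysis of the far end of an octahedral path showing that whenever Bob steers the token into that region Alice can step back onto the kernel; Lemma~\ref{lem1} may be invoked to cut down the work, e.g.\ assuming $3k+2\le n-3k-2$. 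In all cases one also owes the routine verification of the previous paragraph — pinning down exactly which vertices fail the even‑kernel condition for $S'$ and checking they are all neighbours of $s$ — together with a direct check of the smallest instances, the first being $E(2,2)$.
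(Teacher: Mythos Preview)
Your overall plan—Alice opens by moving to $v_{3k+1}$ and then runs an even-kernel defence in which the defects are all adjacent to $s$—is exactly the paper's idea. The gap is that you try to extend the kernel across the \emph{entire} path, including the part to the right of level $3k+2$; this is what manufactures the unresolved obstacle for $n\equiv 0\pmod 3$ and leaves $S'_{\mathrm{right}}$ only vaguely specified even in the other two residue classes.

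The paper's proof avoids all of this by taking only the left-hand kernel $S=\{u_{3i},v_{3i+1}:0\le i\le k\}$ (with $s=v_{3k+2}$; your choice $s=u_{3k+2}$ would work equally well). The observation you are missing is that every vertex of $S$ lies at level at most $3k+1$, so from any vertex of $S$ Bob's only moves that leave the even-kernel graph $H_S$ are the two edges out of $v_{3k+1}$ to level $3k+2$. One of these is deleted by Alice's opening move, and the other leads to a vertex adjacent to $s$, from which Alice wins immediately. Hence the token never reaches level $3k+3$ or beyond, the argument is completely independent of $n$, and there is no need for $S'_{\mathrm{right}}$, for a case split on $n\bmod 3$, for the branching device of Lemma~\ref{lem3}, or for Lemma~\ref{lem1}. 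Once you drop the right-hand extension, your ``main obstacle'' simply disappears.
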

\begin{proof}
Let us consider the same subset
as in Lemma~\ref{lem2}, i.e., $$S=\{u_{3i},v_{3i+1} : i=0,1,\ldots,k\},$$ 
and let $s=v_{3k+2}$ be the starting vertex (see Figure~\ref{fig4}). 
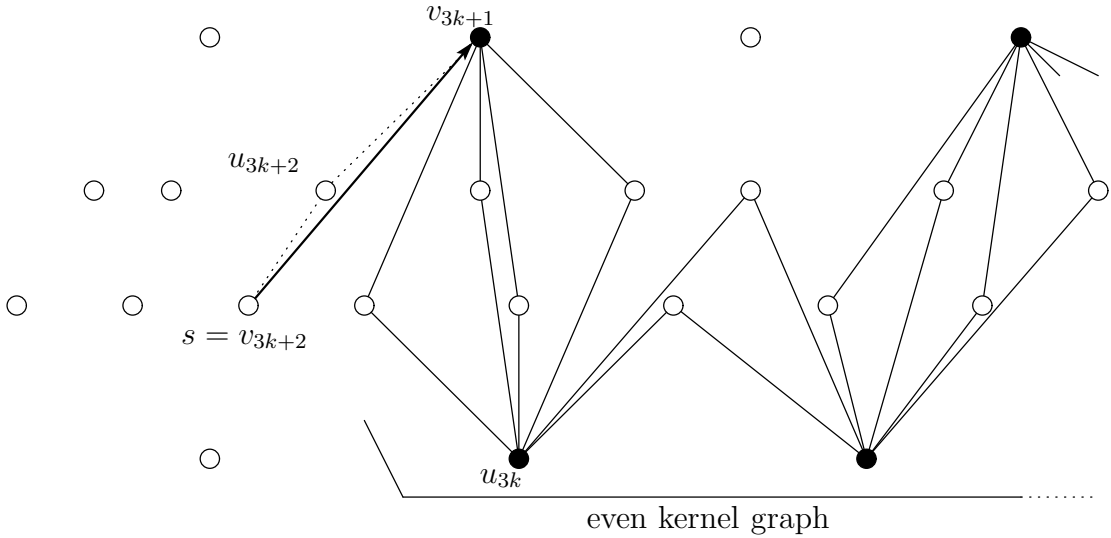
\begin{figure}[htb!]
\centering
\unitlength 0.1in
\begin{picture}( 57.0000, 26.3000)( 19.5000,-40.3500)
%
\special{pn 8}%
\special{sh 1.000}%
\special{ar 4600 3800 50 50  0.0000000 6.2831853}%
\put(43.0000,-14.9000){\makebox(0,0){$v_{3k+1}$}}%
%
\special{pn 8}%
\special{sh 1.000}%
\special{ar 4400 1600 50 50  0.0000000 6.2831853}%
%
\special{pn 8}%
\special{sh 1.000}%
\special{ar 6400 3800 50 50  0.0000000 6.2831853}%
%
\special{pn 8}%
\special{sh 0}%
\special{ar 5800 1600 50 50  0.0000000 6.2831853}%
%
\special{pn 8}%
\special{sh 1.000}%
\special{ar 7200 1600 50 50  0.0000000 6.2831853}%
%
\special{pn 8}%
\special{pa 4600 3000}%
\special{pa 4400 1600}%
\special{fp}%
%
\special{pn 8}%
\special{pa 5200 2400}%
\special{pa 4400 1600}%
\special{fp}%
%
\special{pn 8}%
\special{pa 7200 1600}%
\special{pa 6800 2400}%
\special{fp}%
%
\special{pn 8}%
\special{pa 7200 1600}%
\special{pa 6200 3000}%
\special{fp}%
%
\special{pn 8}%
\special{pa 7000 3000}%
\special{pa 7200 1600}%
\special{fp}%
%
\special{pn 8}%
\special{pa 7200 1600}%
\special{pa 7600 2400}%
\special{fp}%
%
\special{pn 8}%
\special{pa 4400 2400}%
\special{pa 4400 1600}%
\special{fp}%
%
\special{pn 8}%
\special{pa 4600 3000}%
\special{pa 4600 3800}%
\special{fp}%
%
\special{pn 8}%
\special{pa 4600 3800}%
\special{pa 5400 3000}%
\special{fp}%
%
\special{pn 8}%
\special{pa 5400 3000}%
\special{pa 6400 3800}%
\special{fp}%
%
\special{pn 8}%
\special{pa 6400 3800}%
\special{pa 6200 3000}%
\special{fp}%
%
\special{pn 8}%
\special{pa 6400 3800}%
\special{pa 7000 3000}%
\special{fp}%
%
\special{pn 8}%
\special{pa 6400 3800}%
\special{pa 6800 2400}%
\special{fp}%
%
\special{pn 8}%
\special{pa 7600 2400}%
\special{pa 6400 3800}%
\special{fp}%
%
\special{pn 8}%
\special{pa 4600 3800}%
\special{pa 5200 2400}%
\special{fp}%
%
\special{pn 8}%
\special{pa 4400 2400}%
\special{pa 4600 3800}%
\special{fp}%
%
\special{pn 8}%
\special{pa 5800 2400}%
\special{pa 6400 3800}%
\special{fp}%
%
\special{pn 8}%
\special{pa 4600 3800}%
\special{pa 5800 2400}%
\special{fp}%
\put(45.1000,-39.0000){\makebox(0,0){$u_{3k}$}}%
%
\special{pn 8}%
\special{pa 7200 1600}%
\special{pa 7400 1800}%
\special{fp}%
%
\special{pn 8}%
\special{pa 7200 1600}%
\special{pa 7600 1800}%
\special{fp}%
%
\special{pn 8}%
\special{sh 0}%
\special{ar 5200 2400 50 50  0.0000000 6.2831853}%
%
\special{pn 8}%
\special{sh 0}%
\special{ar 4400 2400 50 50  0.0000000 6.2831853}%
%
\special{pn 8}%
\special{sh 0}%
\special{ar 4600 3000 50 50  0.0000000 6.2831853}%
%
\special{pn 8}%
\special{sh 0}%
\special{ar 2600 3000 50 50  0.0000000 6.2831853}%
%
\special{pn 8}%
\special{sh 0}%
\special{ar 2800 2400 50 50  0.0000000 6.2831853}%
%
\special{pn 8}%
\special{sh 0}%
\special{ar 5800 2400 50 50  0.0000000 6.2831853}%
%
\special{pn 8}%
\special{sh 0}%
\special{ar 6800 2400 50 50  0.0000000 6.2831853}%
%
\special{pn 8}%
\special{sh 0}%
\special{ar 7600 2400 50 50  0.0000000 6.2831853}%
%
\special{pn 8}%
\special{sh 0}%
\special{ar 7000 3000 50 50  0.0000000 6.2831853}%
%
\special{pn 8}%
\special{sh 0}%
\special{ar 6200 3000 50 50  0.0000000 6.2831853}%
%
\special{pn 8}%
\special{sh 0}%
\special{ar 5400 3000 50 50  0.0000000 6.2831853}%
%
\special{pn 8}%
\special{sh 0}%
\special{ar 3000 3800 50 50  0.0000000 6.2831853}%
%
\special{pn 8}%
\special{sh 0}%
\special{ar 2400 2400 50 50  0.0000000 6.2831853}%
%
\special{pn 8}%
\special{sh 0}%
\special{ar 2000 3000 50 50  0.0000000 6.2831853}%
%
\special{pn 8}%
\special{sh 0}%
\special{ar 3000 1600 50 50  0.0000000 6.2831853}%
\put(31.8000,-31.8000){\makebox(0,0){$s = v_{3k+2}$}}%
%
\special{pn 13}%
\special{pa 3220 2980}%
\special{pa 4350 1640}%
\special{fp}%
\special{sh 1}%
\special{pa 4350 1640}%
\special{pa 4292 1678}%
\special{pa 4316 1682}%
\special{pa 4322 1704}%
\special{pa 4350 1640}%
\special{fp}%
\put(32.8000,-22.6000){\makebox(0,0){$u_{3k+2}$}}%
%
\special{pn 8}%
\special{pa 3600 2390}%
\special{pa 4390 1610}%
\special{dt 0.045}%
%
\special{pn 8}%
\special{pa 4600 3800}%
\special{pa 3800 3000}%
\special{fp}%
%
\special{pn 8}%
\special{pa 3800 3000}%
\special{pa 4400 1600}%
\special{fp}%
%
\special{pn 8}%
\special{sh 0}%
\special{ar 3800 3000 50 50  0.0000000 6.2831853}%
%
\special{pn 8}%
\special{pa 3800 3600}%
\special{pa 4000 4000}%
\special{fp}%
%
\special{pn 8}%
\special{pa 4000 4000}%
\special{pa 7200 4000}%
\special{fp}%
%
\special{pn 8}%
\special{pa 7200 4000}%
\special{pa 7600 4000}%
\special{dt 0.045}%
\put(55.8000,-41.2000){\makebox(0,0){even kernel graph}}%
%
\special{pn 8}%
\special{pa 3600 2400}%
\special{pa 3200 3000}%
\special{dt 0.045}%
%
\special{pn 8}%
\special{sh 0}%
\special{ar 3600 2400 50 50  0.0000000 6.2831853}%
%
\special{pn 8}%
\special{sh 0}%
\special{ar 3200 3000 50 50  0.0000000 6.2831853}%
\end{picture}%
\caption{The even kernel graph with respect to $S$ in Lemma~\ref{lem4}}\label{fig4}
\end{figure}

First, Alice moves the token to the vertex $v_{3k+1} \in S$. 
Then by following the even kernel graph with respect to $S$, 
we see that Alice can move the token back to $v_{3k+1}$ again. 
Finally, Bob must move the token to $u_{3k+2}$ from $v_{3k+1}$. 
Therefore, Alice can win the game. 
\end{proof}

\begin{lem}\label{lem5}
Alice can win the game on $E(3m,3k)$ for any $m,k \in \ZZ_{\geq 0}$. 
\end{lem}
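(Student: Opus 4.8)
The plan is to imitate the argument of Lemma~\ref{lem4}: produce an ``almost even kernel'' $S$ with $s\notin S$, together with one prescribed opening move for Alice, after which Alice plays the even-kernel strategy of Remark~\ref{rem:even-kernel} in the role of the second player. By the $u_i\mapsto v_i\mapsto w_i$ symmetry of $E_{3m}$ we may assume $s\in\{u_{3k},v_{3k},w_{3k}\}$, and by Lemma~\ref{lem1} we may assume $2k\le m$; which of $u_{3k},v_{3k},w_{3k}$ is most convenient as $s$ will be dictated by the set $S$ chosen below.

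I would build $S=S_L\cup S_R$, where $S_L$ is an initial segment of one of the periodic patterns already used in Lemmas~\ref{lem2}--\ref{lem4} (such as $\{u_{3i},v_{3i+1}\}$) covering the octahedra on the $u_0$-side of the starting octahedron, and $S_R$ is the mirror image (in the sense of Lemma~\ref{lem1}) of such a pattern covering the octahedra on the other side. The two patterns are to be positioned and, where necessary, locally modified near the two ends of the path (indices $0$ and $3m$) and near the starting octahedron so that: (i) $S$ is independent; (ii) $s\notin S$; (iii) $s$ has exactly one neighbour $a^{*}$ in $S$; and (iv) every vertex of $V(E_{3m})\setminus(S\cup\{s\})$ has an even number of neighbours in $S$, with the exception of a bounded set $X$ of ``trap'' vertices, each of which lies in $N(s)$. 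Just as in Lemma~\ref{lem4}, the adjacency list \eqref{neighbor} turns the verification of (i)--(iv) into a finite computation, the periodic patterns behaving uniformly away from the two ends.

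Alice's strategy is then: first move the token $s\to a^{*}$; afterwards, whenever Bob moves the token to a vertex $v\notin S$, move it back into $S$ along an available edge --- possible by (iv), exactly as in Remark~\ref{rem:even-kernel} --- unless $v\in X$, in which case move the token $v\to s$ instead and win. This last move is always legal because every vertex of $X$ lies in $N(s)$ and the only edge incident with $s$ that ever gets used is $sa^{*}$. For correctness: after the opening move $s$ has no available edge into $S$, and since the token sits in the independent set $S$ whenever it is Bob's turn, Bob can never move the token onto $s$; as $E_{3m}$ is Eulerian the game must terminate with the token returning to $s$; since Bob cannot make that move and, by (iv), Alice is never stuck, the terminating move must be Alice's. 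Hence Alice wins. Along the way one also checks that the token never leaves $S\cup N_{E_{3m}}(S)$, so that (iv) is the only parity condition that is needed, and one inspects the cases $k=0$ (where $S_L$ is empty) and small $m$ directly.

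The step I expect to be the main obstacle is the construction of $S$ itself: coordinating the two patterns, their ``end-patches'' at indices $0$ and $3m$, and their junction around the starting octahedron so that (i)--(iv) hold simultaneously and --- most delicately --- so that every exceptional vertex ends up in $N(s)$ with its edge to $s$ still free; this is what forces a careful joint choice of the two patterns and of which of $u_{3k},v_{3k},w_{3k}$ serves as $s$. (If a direct $S$ turns out to be too awkward, an alternative is to delete the octahedron at the $u_0$-end of $E_{3m}$, obtaining $E_{3(m-1)+2}$ with starting index $\equiv 2 \pmod 3$ --- the case settled by Lemma~\ref{lem4} --- and to show that this deletion does not change the winner; but the even-kernel construction is the one in the spirit of Lemmas~\ref{lem2}--\ref{lem4}.)
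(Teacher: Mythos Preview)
Your proposal is a plan rather than a proof: you correctly isolate the shape of the argument (an ``almost even kernel'' plus one prescribed opening move, followed by Remark~\ref{rem:even-kernel} with the roles swapped), but you never build the set $S$, and you yourself flag the junction of $S_L$ and $S_R$ near the starting octahedron as the main obstacle. That obstacle is real, and the paper does \emph{not} overcome it the way you suggest.

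The paper's proof differs from your outline in one essential idea: it does \emph{not} use a single set $S=S_L\cup S_R$, and Alice's opening move does \emph{not} go into any kernel. Instead the paper keeps two \emph{separate} even-kernel sets, $S_1=\{u_{3i},v_{3i+1}:i\le k-1\}$ on the left and $S_2=\{w_{3k+2+3j},u_{3k+3+3j}:j\le m-k-1\}$ on the right, takes $s=v_{3k}$, and has Alice first move to the \emph{buffer} vertex $w_{3k}$, which lies in neither set. From $w_{3k}$ every reply of Bob except $u_{3k-1}$ and $w_{3k+1}$ is adjacent to $s$, so Alice wins immediately; if Bob plays $u_{3k-1}$ Alice enters $S_1$ at $v_{3k-2}$, and if Bob plays $w_{3k+1}$ Alice enters $S_2$ at $w_{3k+2}$. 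The even-kernel play then forces Bob to the unique ``trap'' vertex ($v_{3k-1}$, respectively $u_{3k+1}$), from which Alice returns to $s$. In other words, the branching on Bob's second move replaces the delicate gluing of your $S_L$ and $S_R$: the two kernels never have to be reconciled at the starting octahedron because only one of them is ever used. This is precisely the device that dissolves the obstacle you anticipated; your condition~(iii) (a unique neighbour $a^{*}$ of $s$ in $S$) and the single opening move $s\to a^{*}$ are what make the gluing hard, and the paper simply avoids them.
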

\begin{proof}
Let us consider the same subsets as in Lemma~\ref{lem3}, i.e., 
\begin{align*}
S_1&=\{u_{3i},v_{3i+1} : i=0,1,\ldots,k-1\} \text{ and }\\ 
S_2&=\{w_{3k+2+3j}, u_{3k+2+3j+1} : j=0,1,\ldots,m-k-1\}, 
\end{align*}
and let $s=v_{3k}$ be the starting vertex (see Figure~\ref{fig5}). 
\begin{figure}[htb!]
\centering
\unitlength 0.1in
\begin{picture}( 66.0000, 24.5000)( 16.0000,-36.5000)
\put(46.9500,-12.8500){\makebox(0,0){$u_{3k}$}}%
%
\special{pn 8}%
\special{sh 1.000}%
\special{ar 6800 3600 50 50  0.0000000 6.2831853}%
%
\special{pn 8}%
\special{sh 0}%
\special{ar 7600 1400 50 50  0.0000000 6.2831853}%
%
\special{pn 8}%
\special{pa 5800 2800}%
\special{pa 6800 3600}%
\special{fp}%
%
\special{pn 8}%
\special{pa 6800 3600}%
\special{pa 6600 2800}%
\special{fp}%
%
\special{pn 8}%
\special{pa 6800 3600}%
\special{pa 7400 2800}%
\special{fp}%
%
\special{pn 8}%
\special{pa 7400 2800}%
\special{pa 7600 3000}%
\special{fp}%
%
\special{pn 8}%
\special{pa 6800 3600}%
\special{pa 7200 2200}%
\special{fp}%
%
\special{pn 8}%
\special{pa 8000 2200}%
\special{pa 6800 3600}%
\special{fp}%
%
\special{pn 8}%
\special{pa 6200 2200}%
\special{pa 6800 3600}%
\special{fp}%
\put(49.1000,-37.1000){\makebox(0,0){$v_{3k+1}$}}%
%
\special{pn 8}%
\special{pa 8000 2200}%
\special{pa 8200 2400}%
\special{fp}%
%
\special{pn 8}%
\special{sh 1.000}%
\special{ar 2400 3600 50 50  0.0000000 6.2831853}%
%
\special{pn 8}%
\special{pa 2400 2800}%
\special{pa 2400 3600}%
\special{fp}%
%
\special{pn 8}%
\special{pa 5000 2800}%
\special{pa 4200 2800}%
\special{fp}%
%
\special{pn 8}%
\special{pa 4200 2800}%
\special{pa 3600 2800}%
\special{fp}%
%
\special{pn 8}%
\special{pa 3600 2800}%
\special{pa 3400 1400}%
\special{fp}%
%
\special{pn 8}%
\special{pa 6200 1400}%
\special{pa 5600 2200}%
\special{dt 0.045}%
%
\special{pn 8}%
\special{pa 4000 2200}%
\special{pa 3400 1400}%
\special{dt 0.045}%
%
\special{pn 8}%
\special{pa 4800 1400}%
\special{pa 4600 2200}%
\special{dt 0.045}%
%
\special{pn 8}%
\special{pa 4600 2200}%
\special{pa 3400 3600}%
\special{dt 0.045}%
%
\special{pn 8}%
\special{pa 3400 3600}%
\special{pa 4200 2800}%
\special{dt 0.045}%
%
\special{pn 8}%
\special{pa 4200 2800}%
\special{pa 5000 3600}%
\special{dt 0.045}%
%
\special{pn 8}%
\special{pa 5000 3600}%
\special{pa 4600 2200}%
\special{dt 0.045}%
%
\special{pn 8}%
\special{pa 4600 2200}%
\special{pa 5600 2200}%
\special{dt 0.045}%
%
\special{pn 8}%
\special{pa 4600 2200}%
\special{pa 4000 2200}%
\special{dt 0.045}%
%
\special{pn 13}%
\special{pa 4600 2200}%
\special{pa 4250 2750}%
\special{fp}%
\special{sh 1}%
\special{pa 4250 2750}%
\special{pa 4304 2704}%
\special{pa 4280 2706}%
\special{pa 4270 2684}%
\special{pa 4250 2750}%
\special{fp}%
%
\special{pn 8}%
\special{sh 1.000}%
\special{ar 6200 1400 50 50  0.0000000 6.2831853}%
%
\special{pn 8}%
\special{sh 1.000}%
\special{ar 3400 1400 50 50  0.0000000 6.2831853}%
%
\special{pn 8}%
\special{pa 5000 2800}%
\special{pa 6200 1400}%
\special{fp}%
%
\special{pn 8}%
\special{pa 6200 1400}%
\special{pa 6200 2200}%
\special{fp}%
%
\special{pn 8}%
\special{pa 5800 2800}%
\special{pa 6200 1400}%
\special{fp}%
%
\special{pn 8}%
\special{pa 6200 1400}%
\special{pa 7200 2200}%
\special{fp}%
%
\special{pn 8}%
\special{pa 6200 1400}%
\special{pa 6600 2800}%
\special{fp}%
\put(69.1000,-37.3000){\makebox(0,0){$u_{3k+3}$}}%
%
\special{pn 8}%
\special{pa 3400 2200}%
\special{pa 3400 1400}%
\special{fp}%
%
\special{pn 8}%
\special{pa 3400 2200}%
\special{pa 2400 3600}%
\special{fp}%
%
\special{pn 8}%
\special{pa 3200 2800}%
\special{pa 3400 1400}%
\special{fp}%
%
\special{pn 8}%
\special{pa 3200 2800}%
\special{pa 2400 3600}%
\special{fp}%
%
\special{pn 8}%
\special{pa 2400 3600}%
\special{pa 1800 2800}%
\special{fp}%
%
\special{pn 8}%
\special{pa 1800 2800}%
\special{pa 1600 3000}%
\special{fp}%
%
\special{pn 8}%
\special{pa 2000 2200}%
\special{pa 1800 2400}%
\special{fp}%
%
\special{pn 8}%
\special{pa 2000 2200}%
\special{pa 2400 3600}%
\special{fp}%
%
\special{pn 8}%
\special{pa 2400 2800}%
\special{pa 3400 1400}%
\special{fp}%
%
\special{pn 8}%
\special{pa 3400 1400}%
\special{pa 2600 2200}%
\special{fp}%
%
\special{pn 8}%
\special{pa 2600 2200}%
\special{pa 2400 3600}%
\special{fp}%
\put(25.1000,-37.3000){\makebox(0,0){$u_{3k-3}$}}%
\put(35.1000,-37.3000){\makebox(0,0){$w_{3k-1}$}}%
\put(51.1000,-29.4000){\makebox(0,0){$w_{3k+1}$}}%
\put(57.1000,-23.3000){\makebox(0,0){$u_{3k+1}$}}%
\put(63.0000,-12.9000){\makebox(0,0){$w_{3k+2}$}}%
\put(35.0000,-12.9000){\makebox(0,0){$v_{3k-2}$}}%
\put(41.3000,-20.8000){\makebox(0,0){$v_{3k-1}$}}%
\put(35.0000,-29.1000){\makebox(0,0){$u_{3k-1}$}}%
\put(42.0000,-29.8000){\makebox(0,0){$w_{3k}$}}%
\put(49.5000,-20.9000){\makebox(0,0){$s = v_{3k}$}}%
\put(38.3000,-17.4000){\makebox(0,0){$e_2$}}%
\put(59.2000,-19.4000){\makebox(0,0){$e_1$}}%
%
\special{pn 8}%
\special{pa 4200 2800}%
\special{pa 4800 1400}%
\special{dt 0.045}%
\special{pa 4800 1400}%
\special{pa 4800 1400}%
\special{dt 0.045}%
%
\special{pn 8}%
\special{sh 0}%
\special{ar 4800 1400 50 50  0.0000000 6.2831853}%
%
\special{pn 8}%
\special{sh 0}%
\special{ar 2200 1400 50 50  0.0000000 6.2831853}%
%
\special{pn 8}%
\special{sh 0}%
\special{ar 3400 3600 50 50  0.0000000 6.2831853}%
%
\special{pn 8}%
\special{sh 0}%
\special{ar 5000 3600 50 50  0.0000000 6.2831853}%
%
\special{pn 8}%
\special{sh 0}%
\special{ar 8000 2200 50 50  0.0000000 6.2831853}%
%
\special{pn 8}%
\special{sh 0}%
\special{ar 7400 2800 50 50  0.0000000 6.2831853}%
%
\special{pn 8}%
\special{sh 0}%
\special{ar 7200 2200 50 50  0.0000000 6.2831853}%
%
\special{pn 8}%
\special{sh 0}%
\special{ar 6200 2200 50 50  0.0000000 6.2831853}%
%
\special{pn 8}%
\special{sh 0}%
\special{ar 5800 2800 50 50  0.0000000 6.2831853}%
%
\special{pn 8}%
\special{sh 0}%
\special{ar 6600 2800 50 50  0.0000000 6.2831853}%
%
\special{pn 8}%
\special{sh 0}%
\special{ar 5000 2800 50 50  0.0000000 6.2831853}%
%
\special{pn 8}%
\special{sh 0}%
\special{ar 4600 2200 50 50  0.0000000 6.2831853}%
%
\special{pn 8}%
\special{sh 0}%
\special{ar 4200 2800 50 50  0.0000000 6.2831853}%
%
\special{pn 8}%
\special{sh 0}%
\special{ar 3600 2800 50 50  0.0000000 6.2831853}%
%
\special{pn 8}%
\special{sh 0}%
\special{ar 3200 2800 50 50  0.0000000 6.2831853}%
%
\special{pn 8}%
\special{sh 0}%
\special{ar 2400 2800 50 50  0.0000000 6.2831853}%
%
\special{pn 8}%
\special{sh 0}%
\special{ar 1800 2800 50 50  0.0000000 6.2831853}%
%
\special{pn 8}%
\special{sh 0}%
\special{ar 2000 2200 50 50  0.0000000 6.2831853}%
%
\special{pn 8}%
\special{sh 0}%
\special{ar 2600 2200 50 50  0.0000000 6.2831853}%
%
\special{pn 8}%
\special{sh 0}%
\special{ar 3400 2200 50 50  0.0000000 6.2831853}%
%
\special{pn 8}%
\special{sh 0}%
\special{ar 4000 2200 50 50  0.0000000 6.2831853}%
%
\special{pn 8}%
\special{sh 0}%
\special{ar 5600 2200 50 50  0.0000000 6.2831853}%
\end{picture}%
\caption{The even kernel graphs with respect to $S_1$ and $S_2$ in Lemma~\ref{lem5}}\label{fig5}
\end{figure}
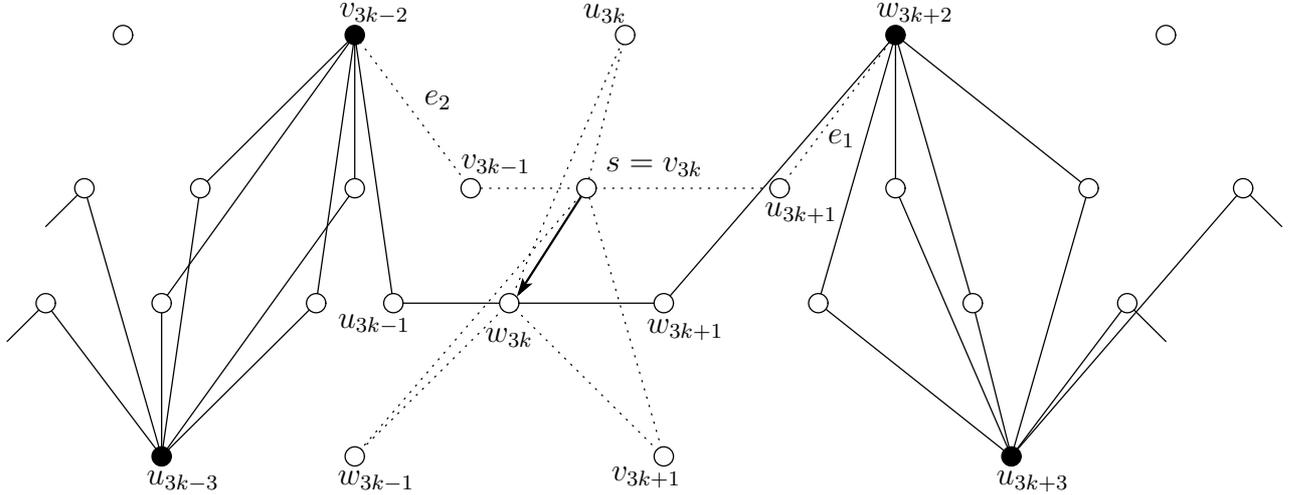

First, Alice moves the token to $w_{3k}$. Then Bob should move the token to $u_{3k-1}$ or $w_{3k+1}$, otherwise Bob will lose the game immediately. 
Consider the case $u_{3k-1}$ (resp., $w_{3k+1}$). Then Alice moves the token to $v_{3k-2}$ (resp., $w_{3k+2}$). 
By following the even kernel graph with respect to $S_1$ (resp., $S_2$), 
we see that Alice can move the token back to $v_{3k-2}$ (resp., $w_{3k+2}$) again. 
Finally, 
Bob must move the token to $v_{3k-1}$ (resp., $u_{3k+1}$) from $v_{3k-2}$ (resp., $w_{3k+2}$). 
Therefore, Alice can win the game. 
\end{proof}

We can now prove Theorem~\ref{thm:octapath}. 
\begin{proof}[Proof of Theorem~\ref{thm:octapath}]
All of the cases directly follow from Lemmas~\ref{lem1}--\ref{lem5} as follows
(where each ``$\equiv$'' means that ``congruent modulo $3$"):
\begin{center}
\begin{tabular}{c|c|c|c} 
             &$n \equiv 0$     &$n \equiv 1$     &$n \equiv 2$ \\ \hline
$p \equiv 0$ &Lemma~\ref{lem5} &Lemma~\ref{lem2} &Lemmas~\ref{lem1} $\&$ \ref{lem4} \\ \hline
$p \equiv 1$ &Lemmas~\ref{lem1} $\&$ \ref{lem4} &Lemma~\ref{lem2} &Lemma~\ref{lem3} \\ \hline
$p \equiv 2$ &Lemma~\ref{lem4} &Lemma~\ref{lem4} &Lemma~\ref{lem4} \\ 
\end{tabular}
\end{center}
\end{proof}

\section{Conclusion}

In this paper,
focusing on 3-chromatic Eulerian triangulations of surfaces,
we completely determine the winner of the feedback game on several classes of the graphs.
In general, for any surface $F^2$ which is not the sphere,
there are infinitely many non-3-colorable Eulerian triangulations of $F^2$.
Since Theorem~\ref{thm:deg4} strongly depends on the 3-colorability of the graphs,
it is not clear which player wins the game on 
a non-3-colorable Eulerian triangulation $G$ even if $c_{4,2}(G)=0$.
In fact, as shown in Figure~\ref{fig7},
there exists a 5-chromatic Eulerian triangulation $G$ of the projective plane
with $c_{4,2}(G)=0$
such that Alice wins the game on $G$ with a starting vertex $s$;
she first moves the token on $s$ to $v$,
and after that, since $s$ is adjacent to all other vertices,
Alice can move the token back to $s$ regardless of Bob's move.

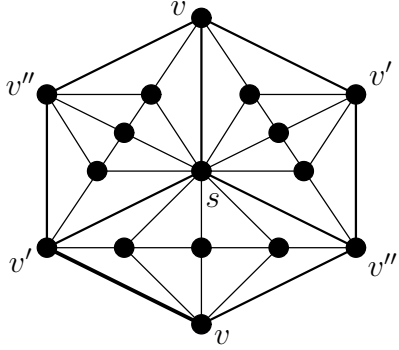
\begin{figure}[htb]
\centering
\unitlength 0.1in
\begin{picture}( 20.1000, 17.8600)( 24.4000,-24.5000)
%
\special{pn 8}%
\special{sh 1.000}%
\special{ar 3600 800 50 50  0.0000000 6.2831853}%
%
\special{pn 8}%
\special{sh 1.000}%
\special{ar 4400 2000 50 50  0.0000000 6.2831853}%
%
\special{pn 8}%
\special{sh 1.000}%
\special{ar 2800 2000 50 50  0.0000000 6.2831853}%
%
\special{pn 8}%
\special{sh 1.000}%
\special{ar 2800 1200 50 50  0.0000000 6.2831853}%
%
\special{pn 8}%
\special{sh 1.000}%
\special{ar 3600 2400 50 50  0.0000000 6.2831853}%
%
\special{pn 8}%
\special{sh 1.000}%
\special{ar 4400 1200 50 50  0.0000000 6.2831853}%
%
\special{pn 8}%
\special{sh 1.000}%
\special{ar 3600 1600 50 50  0.0000000 6.2831853}%
%
\special{pn 13}%
\special{pa 3600 1600}%
\special{pa 3600 800}%
\special{fp}%
%
\special{pn 13}%
\special{pa 3600 800}%
\special{pa 2800 1200}%
\special{fp}%
%
\special{pn 13}%
\special{pa 2800 1200}%
\special{pa 2800 2000}%
\special{fp}%
%
\special{pn 20}%
\special{pa 2800 2000}%
\special{pa 3600 2400}%
\special{fp}%
%
\special{pn 13}%
\special{pa 3600 2400}%
\special{pa 4400 2000}%
\special{fp}%
%
\special{pn 13}%
\special{pa 4400 2000}%
\special{pa 4400 1200}%
\special{fp}%
%
\special{pn 13}%
\special{pa 4400 1200}%
\special{pa 3600 800}%
\special{fp}%
%
\special{pn 13}%
\special{pa 3600 1600}%
\special{pa 2800 2000}%
\special{fp}%
%
\special{pn 13}%
\special{pa 3600 1600}%
\special{pa 4400 2000}%
\special{fp}%
%
\special{pn 8}%
\special{pa 4400 2000}%
\special{pa 3600 800}%
\special{fp}%
%
\special{pn 8}%
\special{pa 3600 800}%
\special{pa 2800 2000}%
\special{fp}%
%
\special{pn 8}%
\special{pa 2800 2000}%
\special{pa 4400 2000}%
\special{fp}%
%
\special{pn 8}%
\special{pa 4000 2000}%
\special{pa 3600 1600}%
\special{fp}%
%
\special{pn 8}%
\special{pa 3600 1600}%
\special{pa 3200 2000}%
\special{fp}%
%
\special{pn 8}%
\special{pa 3200 2000}%
\special{pa 3600 2400}%
\special{fp}%
%
\special{pn 8}%
\special{pa 3600 2400}%
\special{pa 4000 2000}%
\special{fp}%
%
\special{pn 8}%
\special{pa 3600 1600}%
\special{pa 3600 2400}%
\special{fp}%
%
\special{pn 8}%
\special{pa 3600 1600}%
\special{pa 2800 1200}%
\special{fp}%
%
\special{pn 8}%
\special{pa 3600 1600}%
\special{pa 4400 1200}%
\special{fp}%
%
\special{pn 8}%
\special{pa 3860 1200}%
\special{pa 4400 1200}%
\special{fp}%
%
\special{pn 8}%
\special{pa 3600 1600}%
\special{pa 3850 1200}%
\special{fp}%
%
\special{pn 8}%
\special{pa 3340 1200}%
\special{pa 3600 1600}%
\special{fp}%
%
\special{pn 8}%
\special{pa 3310 1200}%
\special{pa 2810 1200}%
\special{fp}%
%
\special{pn 8}%
\special{pa 2800 1200}%
\special{pa 3050 1600}%
\special{fp}%
%
\special{pn 8}%
\special{pa 3050 1600}%
\special{pa 3600 1600}%
\special{fp}%
%
\special{pn 8}%
\special{sh 1.000}%
\special{ar 4000 2000 50 50  0.0000000 6.2831853}%
%
\special{pn 8}%
\special{sh 1.000}%
\special{ar 3600 2000 50 50  0.0000000 6.2831853}%
%
\special{pn 8}%
\special{sh 1.000}%
\special{ar 3200 2000 50 50  0.0000000 6.2831853}%
%
\special{pn 8}%
\special{sh 1.000}%
\special{ar 3200 1400 50 50  0.0000000 6.2831853}%
%
\special{pn 8}%
\special{sh 1.000}%
\special{ar 4000 1400 50 50  0.0000000 6.2831853}%
%
\special{pn 8}%
\special{sh 1.000}%
\special{ar 3850 1200 50 50  0.0000000 6.2831853}%
%
\special{pn 8}%
\special{sh 1.000}%
\special{ar 3340 1200 50 50  0.0000000 6.2831853}%
%
\special{pn 8}%
\special{sh 1.000}%
\special{ar 3060 1600 50 50  0.0000000 6.2831853}%
%
\special{pn 8}%
\special{pa 3600 1600}%
\special{pa 4140 1600}%
\special{fp}%
%
\special{pn 8}%
\special{pa 4140 1600}%
\special{pa 4400 1200}%
\special{fp}%
%
\special{pn 8}%
\special{sh 1.000}%
\special{ar 4130 1600 50 50  0.0000000 6.2831853}%
\put(36.6000,-17.5000){\makebox(0,0){$s$}}%
\put(34.8200,-7.4900){\makebox(0,0){$v$}}%
\put(37.1000,-24.7000){\makebox(0,0){$v$}}%
\put(26.6500,-20.6500){\makebox(0,0){$v'$}}%
\put(45.3000,-10.9000){\makebox(0,0){$v'$}}%
\put(26.6500,-11.4500){\makebox(0,0){$v''$}}%
\put(45.4000,-20.9000){\makebox(0,0){$v''$}}%
\end{picture}%
\caption{The 5-chromatic Eulerian triangulation $G$ of the projective plane
with $c_{4,2}(G)=0$}
\label{fig7}
\end{figure}

On the other hand, 
the proof of Theorem~\ref{thm:c_{4,2}=m} does not strongly depend on the 3-colorability
so much,
since we can obtain a similar statement as Theorem~\ref{thm:c_{4,2}=m}
by showing that for a fixed surface $F^2$,
there exists a non-3-colorable Eulerian triangulation $G$ of $F^2$ with $c_{4,2}(G)=0$.
Thus we guess that 
Theorem~\ref{thm:c_{4,2}=m} also holds for non-3-colorable Eulerian triangulations.

As in the octahedral path,
determining the winner of the game on a 3-chromatic Eulerian triangulation $G$ 
with $c_{4,2}(G) > 0$ is not easy,
since a choice of the starting vertex changes the winner of the game 
(as Theorem~\ref{thm:octapath}).
Similarly to the octahedral path,
a 6-regular triangulation seems to be 
the next reasonable concrete class of Eulerian triangulations,
since only the torus and Klein bottle admit such triangulations 
and they are completely classified (see~\cite{A,KMNNST,N}).
Therefore, we conclude this paper with proposing the following problem.

\begin{prob}
Completely determine the winner of the feedback game on
$3$-chromatic $6$-regular triangulations of the torus or Klein bottle.
\end{prob}

\subsection*{Acknowledgement}
The first author is supported by JSPS Grant-in-Aid for Young Scientists (B) 17K14177.
The third author is supported by JSPS 
Grant-in-Aid for Early-Career Scientists 19K14583.

\end{document}